\newtheorem{example}{Example}
\newtheorem{theorem}{Theorem}
\newtheorem{proposition}{Proposition}
\newcommand{\putaway}[1]{}
\def\prefeq{\succeq}
\def\pref{\succ}
\def\_{\mbox{-}}
\def\~{\mbox{$\sim$}}
\def\set#1{\{#1\}}		
\def\x{\times}			
\def\to{\rightarrow}		
\def\Not{\neg}			
\def\Or{\vee}			
\def\Implies{\rightarrow}	
\def\Iff{\leftrightarrow}	
\def\Union{\bigcup}		
\def\Nat{\mathbb{N}}                
\def\Real{\mathbb{R}}               
\def\phi{\varphi}
\def\TEMPORAL#1{\mbox{\small\boldmath$\mathbf{#1}$}}
\def\ltlnext{\TEMPORAL{X}}
\def\sometime{\TEMPORAL{F}} 
\def\always{\TEMPORAL{G}}
\def\until{\,\TEMPORAL{U}\,}
\def\eventually{\sometime}
\def\calA{\mathcal{A}}  \def\calC{\mathcal{C}}
\def\calG{\mathcal{G}}  
  \def\calL{\mathcal{L}}
 \def\calN{\mathcal{N}} 
\def\calS{\mathcal{S}} \def\calT{\mathcal{T}} 
\def\calV{\mathcal{V}}  
\newcommand{\ac}{\mathit{Ac}}
\newcommand{\vac}{\vec{\ac}}
\newcommand{\acp}{\vec{\alpha}}
\newcommand{\strat}{\vec{\sigma}}
\newcommand{\nasheq}{\textsc{ne}}
\newcommand{\ani}{\textsc{ani}(\calG,\Upsilon)}
\newcommand{\eni}{\textsc{eni}(\calG,\Upsilon)}
\newcommand{\indev}[1]{\textsc{in}(#1,\Gamma)}
\title{Incentive Engineering for Concurrent Games}
\author{David Hyland 
\institute{University of Oxford\\ Oxford, United Kingdom}
\email{david.hyland@cs.ox.ac.uk}
\and
Julian Gutierrez
\institute{Monash University\\
Melbourne, Australia}
\email{julian.gutierrez@monash.edu}
\and
Michael Wooldridge 
\institute{University of Oxford\\ Oxford, United Kingdom}
\email{mjw@cs.ox.ac.uk}
}
\begin{document}
\maketitle

\begin{abstract}
We consider the problem of incentivising desirable behaviours in multi-agent systems by way of taxation schemes. Our study employs the concurrent games model: in this model, each agent is primarily motivated to seek the satisfaction of a goal, expressed as a Linear Temporal Logic (LTL) formula; secondarily, agents seek to minimise costs, where costs are imposed based on the actions taken by agents in different states of the game. In this setting, we consider an external principal who can influence agents' preferences by imposing taxes (additional costs) on the actions chosen by agents in different states. The principal imposes taxation schemes to motivate agents to choose a course of action that will lead to the satisfaction of their goal, also expressed as an LTL formula. However, taxation schemes are limited in their ability to influence agents' preferences: an agent will always prefer to satisfy its goal rather than otherwise, no matter what the costs. The fundamental question that we study is whether the principal can impose a taxation scheme such that, in the resulting game, the principal's goal is satisfied in at least one or all runs of the game that could arise by agents choosing to follow game-theoretic equilibrium strategies. We consider two different types of taxation schemes: in a \emph{static} scheme, the same tax is imposed on a state-action profile pair in all circumstances, while in a \emph{dynamic} scheme, the principal can choose to vary taxes depending on the circumstances. We investigate the main game-theoretic properties of this model as well as the computational complexity of the relevant decision problems.
\end{abstract}

\section{Introduction}
\emph{Rational verification} is the problem of establishing which temporal logic properties will be satisfied by a multi-agent system, under the assumption that agents in the system choose strategies that form a game-theoretic equilibrium~\cite{fisman2010rational,DBLP:conf/aaai/WooldridgeGHMPT16,DBLP:journals/ai/GutierrezNPW20}. Thus, rational verification enables us to verify which desirable and undesirable behaviours could arise in a system through individually rational choices. This article, however, expands beyond verification and studies methods for incentivising outcomes with favourable properties while mitigating undesirable consequences. One prominent example is the implementation of Pigovian taxes, which effectively discourage agents from engaging in activities that generate negative externalities. These taxes have been extensively explored in various domains, including sustainability and AI for social good, with applications such as reducing carbon emissions, road congestion, and river pollution \cite{mankiw2009smart, lyon2001pigouvian, pigou2017economics}.

We take as our starting point the work of~\cite{wooldridge:2013a}, who considered the possibility of influencing one-shot Boolean games by introducing taxation schemes, which impose additional costs onto a game at the level of individual actions. In the model of preferences considered in~\cite{wooldridge:2013a}, agents are primarily motivated to achieve a goal expressed as a (propositional) logical formula, and only secondarily motivated to minimise costs. This logical component limits the possibility to influence agent preferences: an agent can never be motivated by a taxation scheme away from achieving its goal. In related work, Wooldridge et al. defined the following implementation problem: given a game $G$ and an objective $\Upsilon$, expressed as a propositional logic formula, does there exists a taxation scheme $\tau$ that could be imposed upon $G$ such that, in the resulting game $G^\tau$, the objective~$\Upsilon$ will be satisfied in at least one Nash equilibrium \cite{wooldridge:2013a}. 

We develop these ideas by applying models of finite-state automata to introduce and motivate the use of history-dependent incentives in the context of \emph{concurrent games}~\cite{alur:2002a}. In a concurrent game, play continues for an infinite number of rounds, where at each round, each agent simultaneously chooses an action to perform. Preferences in such a multiplayer game are defined by associating with each agent $i$ a Linear Temporal Logic (LTL) goal $\gamma_i$, which agent $i$ desires to see satisfied. In this work, we also assume that actions incur costs, and that agents seek to minimise their limit-average costs.

Since, in contrast to the model of  \cite{wooldridge:2013a}, play in our games continues for an infinite number of rounds, we find there are two natural variations of taxation schemes for concurrent games. In a \emph{static} taxation scheme, we impose a fixed cost on state-action profiles so that the same state-action profile will always incur the same tax, no matter when it is performed. In a \emph{dynamic} taxation scheme, the same state-action profile may incur different taxes in different circumstances: it is history-dependent. We first show that dynamic taxation schemes are strictly more powerful than static taxation schemes, making them a more appropriate model of incentives in the context of concurrent games, characterise the conditions under which an LTL objective $\Upsilon$ can be implemented in a game using dynamic taxation schemes, and begin to investigate the computational complexity of the corresponding decision problems.

\section{Preliminaries}
\label{sec:prelim}
Where $S$ is a set, we denote the powerset of $S$ by $\mathbf{2}^S$.
We use various propositional languages to express properties of the
systems we consider. In these languages, we will let $\Phi$ be a
finite and non-empty vocabulary of Boolean variables, with typical
elements $p, q, \ldots$. Where $a$ is a finite word and $b$ is also a word (either
finite or infinite), we denote the word obtained by concatenating
$a$ and $b$ by $a b$. Where $a$ is a finite word, we denote by $a^\omega$ the infinite repetition of $a$. Finally, we use $\Real_n^+$ for the set of $n$-tuples of non-negative real numbers.

\vspace*{1ex}\noindent\textbf{Concurrent Game Arenas:}
  We work with concurrent game structures, which in this work we will refer to as \emph{arenas} (to distinguish them from the game structures that we introduce later in this section)~\cite{alur:2002a}.
Formally a \textit{concurrent game arena} is given by a structure \[\calA = (\calS,\calN,\ac_1, \ldots,\ac_n,\calT,\calC,\calL,s_0),\] where: $\calS$ is a finite and non-empty set of \textit{arena states}; $\calN = \set{1, \ldots, n}$ is the set of \textit{agents} -- for any $i \in \calN$, we let $-i = \calN \setminus \set{i}$ denote the set of \textit{all agents excluding} $i$; for each $i \in \calN$, $\ac_i$ is the finite and non-empty set of unique actions available to agent $i$ -- we let $\ac = \Union_{i \in \calN} \ac_i$ denote the set of all \textit{actions} available to all players in the game and $\vac = \ac_1 \x \cdots \x \ac_n$ denote the set of all \textit{action profiles}; $\calT : \calS \x \ac_1 \x \cdots \x \ac_n \to \calS$ is the \textit{state transformer function} which prescribes how the state of the arena is updated for each possible action profile -- we refer to a pair $(s,\acp)$, consisting of a state $s \in \calS$ and an action profile $\acp \in \vac$ as a \textit{state-action profile}; $\calC : \calS \x \ac_1 \x \cdots \x \ac_n \to \Real_+^n$ is the \textit{cost function} -- given a state-action profile $(s,\acp)$ and an agent $i \in \calN$, we write $\calC_i(s,\acp)$ for the $i$-th component of $\calC(s,\acp)$, which corresponds to the cost that agent $i$ incurs when $\acp$ is executed at $s$; $\calL : \calS \to \mathbf{2}^\Phi$ is a \textit{labelling function} that specifies which propositional variables are true in each state $s \in \calS$; and $s_0 \in \calS$ is the \textit{initial state} of the arena. In what follows, it is useful to define for every agent $i \in \calN$ the value $c_i^*$ to be the maximum cost that~$i$ could incur through the execution of a state-action profile: $c_i^* = \max \set{\calC_i(s,\acp) \mid s \in \calS, \acp \in \vac}$.

\vspace*{1ex}\noindent\textbf{Runs:}
Games are played in an arena as follows. The arena begins in its initial state $s_0$, and each agent $i \in \calN$ selects an action $\alpha_i \in \ac_i$ to perform; the actions so selected define an action profile, $\acp \in \ac_1 \x \cdots \x \ac_n$. The arena then transitions to a new state $s_1 = \calT(s_0,\alpha_1, \ldots,\alpha_n)$. Each agent then selects another action $\alpha'_i \in Ac_i$, and the arena again transitions to a new state $s_2 = \calT(s_1,\acp')$. In this way, we trace out an infinite interleaved sequence of states and action profiles, referred to as a \emph{run}, $\rho : s_0
  \stackrel{\acp_0}{\longrightarrow} s_1
  \stackrel{\acp_1}{\longrightarrow} s_2
  \stackrel{\acp_2}{\longrightarrow} \cdots$.
  
Where $\rho$ is a run and $k \in \Nat$, we write $s(\rho,k)$ to denote the state indexed by $k$ in $\rho$, so $s(\rho,0)$ is the first state in $\rho$, $s(\rho,1)$ is the second, and so on. In the same way, we denote the $k$-th action profile played in a run $\rho$ by $\acp(\rho,k-1)$ and to single out an individual agent $i$'s $k$-th action, we write $\alpha_i(\rho,k-1)$.

Above, we defined the cost function $\calC$ with respect to
individual state-action pairs. In what follows, we find it useful to lift the
cost function from individual state-action pairs to sequences of state-action pairs and runs. Since runs are infinite, simply taking the sum of costs is not appropriate: instead, we consider the cost of a run to be the \emph{average} cost incurred by an agent $i$ over the run; more precisely, we define the average cost incurred by agent $i$ over the first $t$ steps of the run $\rho$ as $\calC_i(\rho,0:t) = \frac{1}{t+1}\sum_{j=0}^t\calC_i(\rho,j)$ for $t\geq 1$, whereby $\calC_i(\rho,j)$ we mean $\calC_i(s(\rho,j),\acp(\rho,j))$. 
Then, we define the cost incurred by an agent $i$ over the run $\rho$, denoted $\calC_i(\rho)$, as the \emph{inferior limit of means}: $\calC_i(\rho) =\liminf_{t\to\infty} \calC_i(\rho,0:~t)$. It can be shown that the value $\calC_i(\rho)$ always converges because the sequence of averages $\calC_i(\rho,0:t)$ is Cauchy.

\vspace*{1ex}\noindent\textbf{Linear Temporal Logic:}
We use the language of Linear Temporal Logic
(LTL) to express properties of runs~\cite{pnueli:77a,emerson:90a}. Formally, the syntax of LTL is defined wrt.\ a set $\Phi$
of Boolean variables by the following grammar:
    \begin{eqnarray} 
    \varphi ::= 
        \top \mid
    	p \mid		
    	\neg \varphi \mid
    	\varphi \vee \varphi \mid
    	\ltlnext \varphi \mid
    	\varphi \until \varphi 
    \end{eqnarray}
where $p \in \Phi$. 
Other usual logic connectives (``$\bot$'', ``$\And$'', ``$\Implies$'', ``$\Iff$'') are defined in terms of~$\Not$ and~$\Or$ in the conventional way. Given a set of
variables~$\Phi$, let~$LTL(\Phi)$ be the set of LTL formulae
over~$\Phi$; where the variable set~$\Phi$ is clear from the context, we simply write $LTL$. We interpret formulae of LTL with respect to pairs~$(\rho,t)$, where~$\rho$ is a run, and $t \in \Nat$ is a temporal index
into~$\rho$. Any given LTL formula may be true at none or multiple
time points on a run; for example, a formula $\ltlnext q$ will be true
at a time point $t\in\Nat$ on a run $\rho$ if $q$ is true on a run
$\rho$ at time $t+1$. We will write $(\rho,t)\models\phi$ to mean that
$\phi \in LTL$ is true at time $t \in \Nat$ on run $\rho$.  The rules
defining when formulae are true (i.e., the semantics of LTL) are
defined as follows:
%
\[
\renewcommand{\arraystretch}{1.2}
\begin{array}{lcl}
(\rho,t)\models \top & & 
\\
(\rho,t)\models p & \mbox{ iff } & p\in\calL(s(\rho,t))
 \quad \mbox{(where $p  \in \Phi$)} 
\\
(\rho,t)\models \neg \varphi & \mbox{iff} & \mbox{it is not the case that } (\rho,t) \models \varphi\\
(\rho,t)\models \varphi \vee \psi & \mbox{iff} & (\rho,t) \models \varphi 
    \mbox{ or } (\rho,t) \models \psi\\
(\rho,t)\models \ltlnext\varphi & \mbox{iff} & (\rho,t+1) \models \varphi\\
(\rho,t)\models \varphi\until\psi & \mbox{iff} & \mbox{for some } t' \geq t : \ (\rho,t') \models \psi \mbox{ and}\\
		&& \mbox{for all } t \leq t'' < t': (\rho,t'') \models \varphi 
\end{array}
\]

We write $\rho \models \phi$ as a shorthand for $(\rho,0) \models \phi$, in which case we say that \emph{$\rho$ satisfies~$\varphi$}. A formula $\varphi$ is \emph{satisfiable} if there is some run satisfying~$\varphi$. Checking satisfiability for LTL formulae is known to be 
\textsc{PSpace}-complete~\cite{sistla:85b}, while the synthesis problem for LTL is \textsc{2ExpTime}-complete~\cite{pnueli:89a}. 
In addition to the LTL tense operators $\ltlnext$ (``in the next state\ldots'') and $\until$ (``\ldots until \ldots''), we make use of the two derived operators $\sometime$ (``eventually\ldots'') and $\always$ (``always\ldots''), which are defined as follows~\cite{emerson:90a}: 
$\sometime\varphi = \mathop\top \until \varphi$ and $\always\varphi = \neg\sometime\neg\varphi$. 

\vspace*{1ex}\noindent\textbf{Strategies:} 
We model strategies for agents as finite-state machines with output. Formally,  strategy $\sigma_i$ for agent $i \in \calN$ is given by a structure $\sigma_i = (Q_i,next_i,do_i,q_i^0)$, where $Q_i$ is a finite set of machine states, $next_i : Q_i \x \ac_1 \x \cdots \x \ac_n \to Q_i$ is the machine's state transformer function, $do_i : Q_i \to \ac_i$ is the machine's action selection function, and $q^0_i \in Q_i$ is the machine's initial state. A collection of strategies, one for each agent $i\in\calN$, is a \emph{strategy profile}: $\strat = (\sigma_1, \ldots, \sigma_n)$.  A strategy profile $\strat$ enacted in an arena $\calA$ will generate a unique run, which we denote by $\rho(\strat,\calA)$; the formal definition is standard, and we will omit it here \cite{DBLP:journals/ai/GutierrezNPW20}. Where $\calA$ is clear from the context, we will simply write $\rho(\strat)$. For each agent $i\in\calN$, we write $\Sigma_i$ for the set of all possible strategies for the agent and $\Sigma = \Sigma_1 \x \cdots \x \Sigma_n$ for the set of all possible strategy profiles for all players.

For a set of distinct agents $A \subseteq \calN$, we write $\Sigma_A = \prod_{i \in A} \Sigma_i$ for the set of partial strategy profiles available to the group $A$ and $\Sigma_{-A} = \prod_{j \in \calN \setminus A} \Sigma_j$ for the set of partial strategy profiles available to the set of all agents excluding those in $A$. Where $\strat = (\sigma_1, \ldots, \sigma_i, \ldots, \sigma_n)$ is a strategy profile and $\sigma_i'$ is a strategy for agent $i$, we denote the strategy profile obtained by replacing the $i$-th component of $\strat$ with $\sigma_i'$ by $(\strat_{-i},\sigma_i')$. Similarly, given a strategy profile $\strat$ and a set of agents $A \subseteq \calN$, we write $\strat_A = (\sigma_i)_{i \in A}$ to denote a partial strategy profile for the agents in $A$ and if $\strat_A' \in \Sigma_A$ is another partial strategy profile for $A$, we write $(\strat_{-A},\strat_A')$ for the strategy profile obtained by replacing $\strat_A$ in $\strat$ with $\strat_A'$.
 
\vspace*{1ex}\noindent\textbf{Games, Utilities, and Preferences:}
We obtain a \emph{concurrent game} from an arena $\calA$ by associating with each agent $i$ a goal $\gamma_i$, represented as an LTL formula. Formally, a concurrent game $\calG$ is given by a structure \[\calG = (\calS,\calN,\ac_1, \ldots,\ac_n,\calT,\calC,\calL,s_0, \gamma_1, \ldots, \gamma_n),\] where $(\calS,\calN,\ac_1, \ldots,\ac_n,\calT,\calC,\calL,s_0)$ is a concurrent game arena, and $\gamma_i$ is the LTL goal of agent $i$, for each $i \in \calN$. Runs in a concurrent game $\calG$ are defined over the game's arena $\calA$, and hence we use the notations $\rho(\strat,\calG)$ and $\rho(\strat,\calA)$ interchangeably. When the game or arena is clear from the context, we omit the $\calG$ and simply write $\rho(\strat)$. Given a strategy profile $\strat$, the generated run $\rho(\strat)$ will satisfy the goals of some agents and not satisfy the goals of others, that is, there will be a set $W(\strat) = \set{i \in \calN : \rho(\strat)\models \gamma_i}$ of \textit{winners} and a set $L(\strat) = \calN \setminus W(\strat)$ of \textit{losers}.

We are now ready to define preferences for agents. Our basic idea is that, as in~\cite{wooldridge:2013a}, agents' preferences are structured: they first desire to accomplish their goal, and secondarily desire to minimise their costs. To capture this idea, it is convenient to define preferences via utility functions $u_i$ over runs, where $i$'s utility for a run $\rho$ is
\[ u_i(\rho) = 
\left\{\begin{array}{ll}
1+c_i^*-\calC_i(\rho)   & \mbox{if $\rho\models \gamma_i$}\\
-\calC_i(\rho)          & \mbox{otherwise.}
\end{array}\right.
\]

Defined in this way, if an agent $i$ gets their goal achieved, their utility will lie in the range $[1,c_i^* + 1]$ (depending on the cost she incurs), whereas if she does not achieve their goal, then their utility will lie within $[-c_i^*,0]$.
Preference relations $\prefeq_i$ over runs are then defined in the obvious way: $\rho_1 \prefeq_i \rho_2$ if and only if $u_i(\rho_1) \geq u_i(\rho_2)$, with indifference relations $\sim_i$ and strict preference relations $\pref_i$ defined as usual.

\begin{figure}
    \centering
    \begin{subfigure}{.49\textwidth}
    \centering
        \includegraphics[scale=0.4]{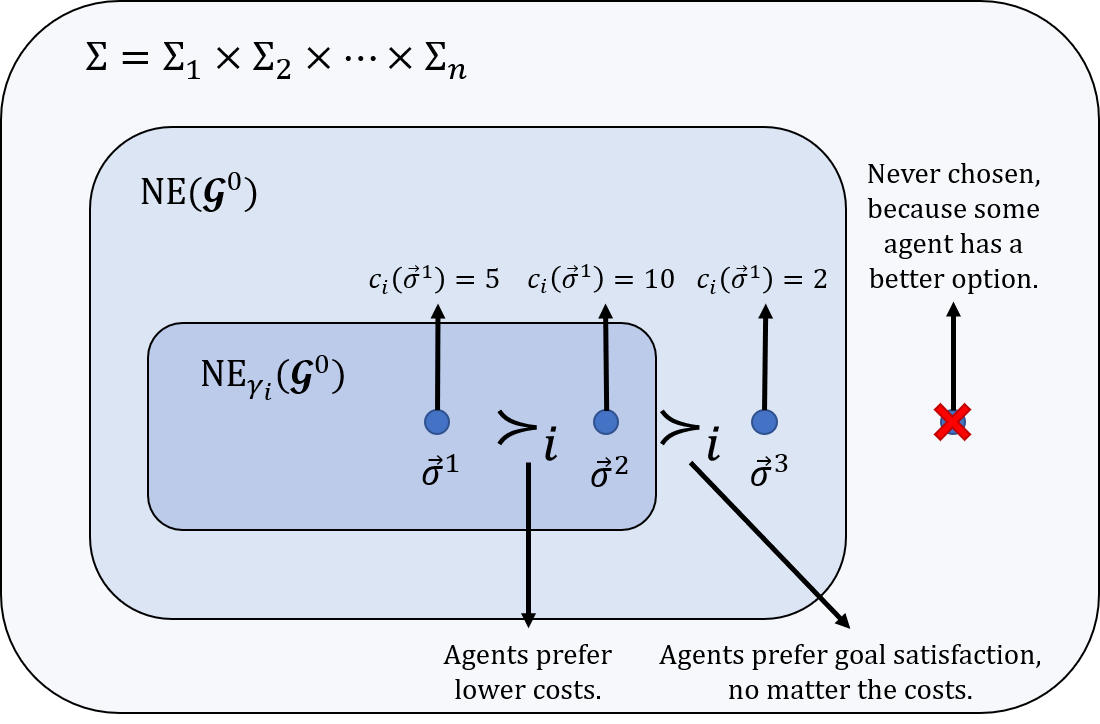}
        \caption{}
        \label{fig:agentprefs}
    \end{subfigure}
    \begin{subfigure}{.5\textwidth}
        \centering
        \includegraphics[scale=0.55]{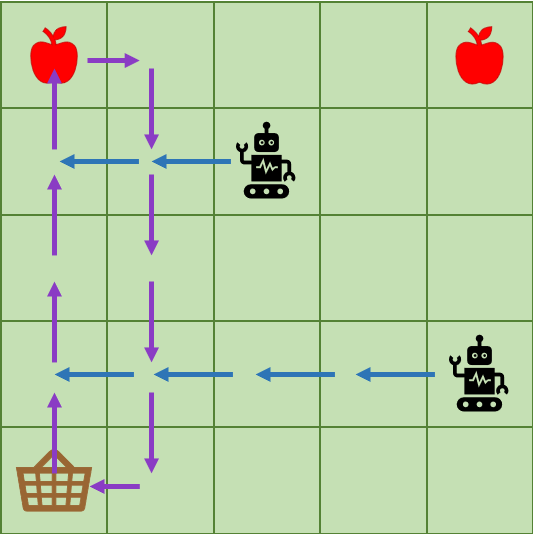}
        \caption{}
        \label{example:applerobots}
    \end{subfigure}
    \caption{(a) Illustration of the lexicographic quantitative and qualitative preferences of agents. (b) A concurrent game where two robots are situated in a grid world and are programmed to 1) never crash into another robot and 2) to secondarily minimise their limit-average costs. The arrows indicate how a run may be decomposed into a non-repeating and an infinitely-repeating component.}
\end{figure}

\vspace*{1ex}\noindent\textbf{Nash equilibrium:}
 A strategy profile $\strat$ is a (pure strategy) Nash equilibrium if there is no agent $i$ and strategy $\sigma_i'$ such that $\rho(\strat_{-i},\sigma_i') \pref_i \rho(\strat)$. If such a strategy $\sigma_i'$ exists for a given agent $i$, we say that $\sigma_i'$ is a \textit{beneficial deviation} for $i$ from $\strat$.
 Given a game $\calG$, let $\nasheq(\calG)$ denote its set of Nash equilibria. In general, Nash equilibria in this model of concurrent games may require agents to play infinite memory strategies \cite{chatterjee:2005a}, but we do not consider these in this study \footnote{Even in the purely quantitative setting where all agents' goals are $\top$, it is still possible that some Nash equilibria require infinite memory \cite{gutierrez2019equilibrium}.}.
 Where $\phi$ is an LTL formula, we find it useful to define $\nasheq_\phi(\calG)$ to be the set of Nash equilibrium strategy profiles that result in $\phi$ being satisfied: $\nasheq_\phi(\calG)= \set{\strat \in \nasheq(\calG) \mid \rho(\strat)\models\phi}.$
 It is sometimes useful to consider a concurrent game that is modified so that no costs are incurred in it. We call such a game a \emph{cost-free game}. Where $\calG$ is a game, let $\calG^{\mathbf 0}$ denote the game that is the same as $\calG$ except that the cost function $\calC^{\mathbf 0}$ of $\calG^{\mathbf 0}$ is such that $\calC_i^{\mathbf 0}(s,\acp) = 0$ for all $i \in \calN$, $s \in \calS$, and $\acp \in \vac$. 
 Given this, the following is readily established (cf., \cite{DBLP:journals/ai/GutierrezNPW20}):
  
  \begin{theorem}\label{thm:ne}
    Given a game $\calG$, the problem of checking
    whether $\nasheq(\calG^{\mathbf 0}) \neq \emptyset$ is \textsc{2ExpTime}-complete.
  \end{theorem}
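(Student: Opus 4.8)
The plan is to collapse the problem to the purely qualitative setting and then run the standard equilibrium machinery for games with $\omega$-regular objectives. First I would observe that in the cost-free game $\calG^{\mathbf 0}$ all costs vanish: $c_i^* = 0$ and $\calC_i(\rho) = 0$ for every agent $i$ and run $\rho$, so $u_i(\rho) = 1$ when $\rho \models \gamma_i$ and $u_i(\rho) = 0$ otherwise. Hence a profile $\strat$ is a Nash equilibrium of $\calG^{\mathbf 0}$ precisely when no loser $j \in L(\strat)$ has a deviation $\sigma_j'$ with $\rho(\strat_{-j},\sigma_j') \models \gamma_j$ -- winners already obtain the maximal utility $1$ and can never benefit from deviating. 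So, for equilibrium purposes, $\calG^{\mathbf 0}$ is just a concurrent game with LTL goals and no costs.

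For the upper bound I would prove the following characterisation (a folk-theorem-style statement): $\nasheq(\calG^{\mathbf 0}) \neq \emptyset$ iff there exist a set $W \subseteq \calN$ and an ultimately periodic run $\rho$ such that (i) $\rho \models \bigwedge_{i \in W}\gamma_i \wedge \bigwedge_{j \notin W}\neg\gamma_j$, and (ii) for each $j \notin W$ the coalition $-j$ has a strategy profile in $\Sigma_{-j}$ that, played against an arbitrary strategy of $j$, follows $\rho$ as long as $j$ complies and, from the first round at which $j$ deviates, forces $\neg\gamma_j$. For ``only if'' one reads $W$ off the winner set and extracts the punishing strategies from the absence of beneficial deviations; for ``if'' one builds the equilibrium explicitly, with every agent playing towards $\rho$ on the equilibrium path and all agents other than $j$ switching to their punishing strategy as soon as a unilateral deviation by $j$ is detected (which they can do, since transitions are observed). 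To make (ii) effective I would move to the product of the arena with deterministic parity automata for the goals $\gamma_i$; there ``$-j$ can force $\neg\gamma_j$'' becomes a two-player zero-sum parity game, with computable winning region and finite-memory winning strategies -- this is where the solution of LTL games (equivalently, LTL synthesis), \textsc{2ExpTime}-complete, enters. The decision procedure then enumerates the at most $2^n$ candidate sets $W$ and, for each, checks (i) by testing non-emptiness of the product of the arena with a B\"uchi automaton for $\bigwedge_{i\in W}\gamma_i \wedge \bigwedge_{j\notin W}\neg\gamma_j$ (single exponential) and checks (ii) by solving, for every $j \notin W$, the zero-sum LTL game for $\neg\gamma_j$ (double exponential); it accepts iff some $W$ passes both tests. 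The running time is dominated by the LTL-game solving, which yields membership in \textsc{2ExpTime}.

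For the lower bound I would reduce from realizability of an LTL specification $\varphi$ over a vocabulary partitioned into inputs $I$ and outputs $O$, with $O$ controlled by the system -- a \textsc{2ExpTime}-hard problem. The reduction builds a cost-free concurrent game containing a ``protagonist'' agent whose goal is $\varphi$ and adversarial agents whose moves jointly encode the inputs, arranged so that a Nash equilibrium exists exactly when the protagonist can secure $\varphi$ against every adversarial behaviour, i.e.\ exactly when $\varphi$ is realizable. The subtle point is to forbid ``spurious'' equilibria: the gadget must ensure that no adversary can unilaterally force $\neg\varphi$, so that equilibrium existence does not collapse to anything weaker than realizability.

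The step I expect to be the main obstacle is the correctness of clause (ii) of the characterisation -- in particular the off-equilibrium bookkeeping. A deviation by $j$ happens only after some finite prefix of $\rho$ has already been played, and one must still be able to enforce $\neg\gamma_j$ from the position then reached; this is exactly why the argument is carried out over the product of the arena with the goal automata rather than over the arena alone. A secondary, smaller difficulty is designing the hardness gadget so as to rule out spurious equilibria.
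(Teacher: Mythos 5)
The paper does not actually prove this statement: it is quoted as ``readily established'' with a pointer to the rational-verification literature (Gutierrez et al.), where the standard proof is exactly the one you outline --- reduce to the purely qualitative setting, characterise equilibrium existence via punishment strategies, solve the resulting parity/LTL games for the upper bound, and reduce from LTL realizability for the lower bound. So your reconstruction matches the intended argument in both structure and complexity accounting, and your observation that in $\calG^{\mathbf 0}$ utilities collapse to the $0/1$ goal indicator is the right first step.

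One point where your written decision procedure is looser than the argument it needs to be: you describe checking (i) (existence of a run with winner set $W$) and (ii) (punishability of each loser) as two separate tests, the latter solved ``from the initial state.'' That is neither necessary nor sufficient. The correct formulation computes, for each $j \notin W$, the winning region $\mathit{Pun}_j$ of the coalition $-j$ for $\neg\gamma_j$ in the product of the arena with the goal automata, and then asks for a single run that satisfies (i) \emph{and} remains inside $\bigcap_{j \notin W}\mathit{Pun}_j$ at every position, so that punishment is available at whatever point $j$ deviates. You clearly know this --- your closing paragraph identifies precisely this off-equilibrium bookkeeping as the crux and names the fix (work in the product) --- but the procedure as stated does not implement it, so you should fold the punishable-region constraint into the emptiness check rather than running the two tests independently. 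With that repair, and with the standard matching-pennies-style gadget to kill spurious equilibria in the hardness reduction, the proof is complete.
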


The notion of Nash equilibrium is closely related to the concept of beneficial deviations. Given how preferences are defined in this study, it will be useful to introduce terminology that captures the potential deviations that agents may have \cite{harrenstein:2017a}. Firstly, given a game $\calG$, we say that a strategy profile $\strat^1 \in \Sigma$ is \textit{distinguishable} from another strategy profile $\strat^2 \in \Sigma$ if $\rho(\strat^1,\calG) \neq \rho(\strat^2,\calG)$. Then, for an agent $i$, a strategy profile $\strat$, and an alternative strategy $\sigma_i' \neq \sigma_i$, we say that $\sigma_i'$ is an \textit{initial deviation} for agent $i$ from strategy profile $\strat$, written $\strat \to_i (\strat_{-i},\sigma_i')$, if we have $i \in W(\strat) \Rightarrow i \in W(\strat_{-i},\sigma_i')$ and strategy profile $\strat$ is distinguishable from $(\strat_{-i},\sigma_i')$.

\section{Taxation Schemes}
\label{sec:incentives}

We now introduce a model of incentives for concurrent games. For incentives to work, they clearly must appeal to an agent's preferences $\prefeq_i$. As we saw above, incentives for our games are defined with respect to both goals and costs: an agent's primary desire is to see their goal achieved -- the desire to minimise costs is strictly secondary to this. We will assume that we cannot change agents' goals: they are assumed to be fixed and immutable. It follows that any incentives we offer an agent to alter their behaviour must appeal to the costs incurred by that agent. Our basic model of incentives assumes that we can alter the cost structure of a game by imposing \emph{taxes}, which depend on the collective actions that agents choose in different states. Taxes may increase an agent's costs, influencing their preferences and rational choices.

Formally, we model static taxation schemes as functions $\tau : \calS \x \vac \to \Real_+^n.$ A static taxation scheme $\tau$ imposed on a game $\calG = (\calS,\calN,\ac_1, \ldots,\ac_n,\calT,\calC,\calL,s_0, \gamma_1, \ldots, \gamma_n)$ will result in a new game, which we denote by\[\calG^\tau = (\calS,\calN,\ac_1, \ldots,\ac_n,\calT,\calC^\tau,\calL,s_0, \gamma_1, \ldots, \gamma_n),\] which is the same as $\calG$ except that the cost function $\calC^\tau$ of $\calG^\tau$ is defined as $\calC^\tau(s,\acp) = \calC(s,\acp) + \tau(s,\acp).$ Similarly, we write $\calA^{\tau}$ to denote the arena with modified cost function $\calC^{\tau}$ associated with $\calG^{\tau}$ and $u_i^{\tau}(\rho)$ to denote the utility function of agent $i$ over run $\rho$ with the modified cost function $\calC^{\tau}$. Given $\calG$ and a taxation scheme $\tau$, we write $\rho_1 \prefeq_i^{\tau} \rho_2$ iff $u_i^{\tau}(\rho_1) \geq u_i^{\tau}(\rho_2)$. The indifference relations $\sim_i^{\tau}$ and strict preference relations $\pref_i^{\tau}$ are defined analogously.

The model of static taxation schemes has the advantage of simplicity, but it is naturally limited in the range of behaviours it can incentivise---particularly with respect to behaviours $\Upsilon$ expressed as LTL formulae. To overcome this limitation, we therefore introduce a \emph{dynamic} model of taxation schemes. This model essentially allows a designer to impose taxation schemes that can choose to tax the same action in different amounts, depending on the history of the run to date. A very natural model for dynamic taxation schemes is to describe them using a finite state machine with output---the same approach that we used to model strategies for individual agents. Formally, a \textit{dynamic taxation scheme} $T$ is defined by a tuple $T = (Q_T,next_T,do_T,q_T^0)$ where $Q_T$ is a finite set of taxation machine states, $next_T : Q_T \x \ac_1 \x \cdots \x \ac_n \to Q_T$ is the transition function of the machine, $q_T^0 \in Q_T$ is the initial state, and $do_T : Q_T \to (\calS \x \vac \to \Real_+^n)$ is the output function of the machine.
With this, let $\mathscr{T}$ be the set of all dynamic taxation schemes for a game $\calG$. As a run unfolds, we think of the taxation machine being executed alongside the strategies. At each time step, the machine outputs a static taxation scheme, which is applied at that time step only, with $do_T(q_T^0)$ being the initial taxation scheme imposed.

When we impose dynamic taxation schemes, we no longer have a simple transformation $\calG^\tau$ on games as we did with static taxation schemes $\tau$. Instead, we define the effect of a taxation scheme with respect to a run $\rho$. Formally, given a run $\rho$ of a game $\calG$, a dynamic taxation scheme $T$ induces an infinite sequence of static taxation schemes, which we denote by $t(\rho,T)$. We can think of this sequence as a function $t(\rho,T) : \Nat \to (\calS \x \vac \to \Real_+^n)$.
We denote the cost of the run $\rho$ in the presence of a dynamic taxation scheme $T$ by $\calC^T(\rho)$:
$$
\calC^T(\rho) = 
\liminf_{u\to\infty}
\frac{1}{u}\sum_{v=0}^{u} \calC(\rho,v) +
\underbrace{t(\rho,T)(v)(s(\rho,v),\acp(\rho,v))}_{(*)}
$$
The expression $(*)$ denotes the vector of taxes incurred by the agents as a consequence of performing the action profile which they chose at time step $v$ on the run $\rho$. The cost $\calC_i^T(\rho)$ to agent $i$ of the run $\rho$ under $T$ is then given by the $i$-th component of $\calC^T(\rho)$.

\begin{example}\label{example:robots}

 Two robots are situated in a grid world (Figure \ref{example:applerobots}), where atomic propositions represent events where a robot picks up an apple (label $a_{ij}$ represents agent $i$ picking up apple $j$), has delivered an apple to the basket (label $b_i$ represents agent $i$ delivering an apple to the basket), or where the robots have crashed into each other (label $c$). 
 Additionally, suppose that both robots are programmed with LTL goals $\gamma_1 = \gamma_2 = \always \neg c$. In this way, the robots are not pre-programmed to perform specific tasks, and it is therefore the duty of the \textit{principal} to design taxes that motivate the robots to perform a desired function, e.g., pick apples and deliver them to the basket quickly. Because the game is initially costless, there is an infinite number of Nash equilibria that could arise from this scenario and it is by no means obvious that the robots will choose one in which they perform the desired function. Hence, the principal may attempt to design a taxation scheme to eliminate those that do not achieve their objective, thus motivating the robots to collect apples and deliver them to the basket. Clearly, using dynamic taxation schemes affords the principal more control over how the robots should accomplish this than static taxation schemes.
\end{example}


\section{Nash Implementation}

We consider the scenario in which a principal, who is external to the game, has a particular goal that they wish to see satisfied within the game; in a general economic setting, the goal might be intended to capture some principle of social welfare, for example. In our setting, the goal is specified as an LTL formula $\Upsilon$, and will typically represent a desirable system/global behaviour. The principal has the power to influence the game by choosing a taxation scheme and imposing it upon the game. Then, given a game $\calG$ and a goal $\Upsilon$, our primary question is whether it is possible to design a taxation scheme $T$ such that, assuming the agents, individually and independently, act rationally (by choosing strategies $\strat$ that collectively form a Nash equilibrium in the modified game), the goal $\Upsilon$ will be satisfied in the run $\rho(\strat)$ that results from executing the strategies $\strat$. In this section, we will explore two ways of interpreting this problem.

\vspace*{1ex}\noindent\textbf{E-Nash Implementation:} A goal $\Upsilon$ is \emph{E-Nash implemented} by a taxation scheme $T$ in $\calG$ if there is a Nash equilibrium strategy profile $\strat$ of the game $\calG^T$ such that $\rho(\strat) \models \Upsilon$. The notion of E-Nash implementation is thus analogous to the E-Nash concept in rational verification~\cite{gutierrezetal:2017a,gutierrez:2020a}. Observe that, if the answer to this question is ``yes'' then this implies that the game $\calG^T$ has at least one Nash equilibrium. Let us define the set $\eni$ to be the set of taxation schemes $T$ that E-Nash implements $\Upsilon$ in~$\calG$:
\[
\eni = 
    \set{T \in \mathscr{T} \mid \nasheq_\Upsilon(\calG^T) \neq \emptyset} \ .
\]
The obvious decision problem is then as follows:
\begin{quote}
\underline{\textsc{E-Nash Implementation}}:\\
\emph{Given}: Game $\calG$, LTL goal $\Upsilon$.\\
\emph{Question}: Is it the case that $\eni \neq \emptyset$?
\end{quote}
This decision problem proves to be closely related to the \textsc{E-Nash} problem~\cite{gutierrezetal:2017a,gutierrez:2020a}, and the following result establishes its complexity:

\begin{theorem}\label{thm:e-nashcomplexity}
\textsc{E-Nash Implementation} is \textsc{2ExpTime}-complete, even when $\mathscr{T}$ is restricted to static taxation schemes.
\end{theorem}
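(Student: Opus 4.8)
The plan is to establish the two directions of the complexity bound separately, exploiting the tight connection with the classical \textsc{E-Nash} problem of rational verification, which is known to be \textsc{2ExpTime}-complete~\cite{gutierrezetal:2017a,gutierrez:2020a}. For the \textbf{lower bound}, I would reduce from \textsc{E-Nash} (or directly from \textsc{LTL} synthesis / the nonemptiness question of Theorem~\ref{thm:ne}). The key observation is that an empty taxation scheme is a legitimate static taxation scheme, so if we are handed a game $\calG$ together with $\Upsilon$ in the \textsc{E-Nash} sense, then asking whether $\eni \neq \emptyset$ is at least as hard as asking whether $\nasheq_\Upsilon(\calG) \neq \emptyset$: more precisely, I would argue that the instance has a positive answer already reduces. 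Concretely, take a game where all agents have goal $\top$ and costs are zero; then every strategy profile is a Nash equilibrium, taxation is powerless to change preferences in any relevant way (all utilities stay equal), and $\eni \neq \emptyset$ iff there is \emph{some} run satisfying $\Upsilon$ in the arena, which subsumes \textsc{LTL} satisfiability over the arena; combining this with the standard encoding of \textsc{2ExpTime}-hard synthesis instances into such arenas gives the lower bound, and it holds already for the trivial (hence static) taxation scheme.

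For the \textbf{upper bound}, the strategy is to show that the existence of \emph{some} dynamic taxation scheme $T$ with $\nasheq_\Upsilon(\calG^T)\neq\emptyset$ can be decided in \textsc{2ExpTime}, which simultaneously covers the static case since static schemes are a special case of dynamic ones. The first step is a \emph{collapse lemma}: the principal's ability to E-Nash implement $\Upsilon$ does not actually depend on being able to impose taxes at all. Intuitively, to witness $\eni \neq \emptyset$ we need a single run $\rho$ satisfying $\Upsilon$ that is sustainable as (the outcome of) a Nash equilibrium in \emph{some} taxed game; but taxes only ever \emph{raise} costs, and since an agent strictly prefers achieving its goal to not achieving it regardless of cost, the deviations available to a ``winner'' along $\rho$ are constrained only by whether a deviation lets that agent keep its goal and strictly lower its cost. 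I would make precise that $\eni \neq \emptyset$ iff there is a run $\rho$ with $\rho\models\Upsilon$ and a strategy profile $\strat$ with $\rho(\strat)=\rho$ such that $\strat\in\nasheq(\calG^{T})$ for the dynamic scheme $T$ that taxes every action profile \emph{off} the path $\rho$ maximally (punishing deviators) and taxes nothing on $\rho$ — this is the most permissive scheme for sustaining $\rho$, so it suffices to check membership for this canonical $T$. That reduces the question to: does there exist a Nash-equilibrium-sustainable run satisfying $\Upsilon$ in a game whose only costs come from this punishment scheme?

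The final step is to cast this residual question as an instance of ordinary \textsc{E-Nash}. Here the punishment-off-path construction is essentially the folk-theorem argument: a run $\rho = u\cdot v^\omega$ (decomposed into a finite prefix and a lasso, as in Figure~\ref{example:applerobots}) is Nash-sustainable under the canonical punishing scheme iff no agent $i$ has a beneficial deviation, and because off-path play is punished maximally, the only candidate deviations are those in which $i$ can, by unilaterally switching to a different strategy while the others follow their strategy off $\rho$, achieve its goal $\gamma_i$ (when it is currently a loser) or strictly reduce its limit-average cost while staying a winner — and the latter is impossible if $\rho$'s on-path cost for $i$ is already minimal among winning responses. This is exactly the kind of condition that the \textsc{E-Nash} decision procedure checks (compute, for each agent, the best achievable value subject to a winning constraint, via a product of the arena with deterministic parity automata for the $\gamma_i$, the negations, and $\Upsilon$, then solve mean-payoff-parity games), and all of this runs in doubly-exponential time since the automata are doubly exponential in the formula sizes and mean-payoff-parity games are solvable in time polynomial in the arena and the number of priorities. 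The \textbf{main obstacle} I anticipate is the collapse lemma — formalising that attention can be restricted to a single canonical taxation scheme (maximal off-path punishment, zero on-path) without loss of generality. The subtlety is the limit-average cost: a deviation need only change costs in the limit, so one must argue carefully that finite-prefix tax manipulations are irrelevant and that the $\liminf$-of-means cost of the punishment scheme genuinely dominates every alternative a deviator could face; handling the case where the deviation itself produces a run with a more favourable limit structure, and ensuring the punished game still has \emph{a} Nash equilibrium at all, is where the real work lies.
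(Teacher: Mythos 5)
There is a genuine gap on both sides of the argument. For the \textbf{lower bound}, your concrete construction (all agents have goal $\top$, all costs zero) reduces the question to ``does some run of the arena satisfy $\Upsilon$?'', which is only an existential LTL model-checking/satisfiability question and hence \textsc{PSpace}-hard, not \textsc{2ExpTime}-hard; the sentence about ``combining this with the standard encoding of synthesis instances into such arenas'' cannot be made to work, because flattening every $\gamma_i$ to $\top$ destroys exactly the adversarial goal structure that the \textsc{2ExpTime}-hardness of synthesis/\textsc{E-Nash} relies on. The paper instead reduces from Nash-equilibrium existence in \emph{cost-free} games with arbitrary LTL goals (Theorem~\ref{thm:ne}), asking whether $\Upsilon=\top$ can be implemented in $\calG^{\mathbf 0}$; your parenthetical mention of that route is the right one, but it is not what you actually execute.

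For the \textbf{upper bound}, your per-run ``maximal off-path punishment'' scheme is far heavier than needed and, as stated, lands you on \textsc{E-Nash} for a game that still carries lexicographic LTL-plus-mean-payoff preferences (you propose solving it via mean-payoff-parity products); the paper explicitly treats the complexity of equilibrium checking in that model as unsettled, and such equilibria may require infinite memory, so this step does not yield a \textsc{2ExpTime} bound. The paper's collapse is simpler and stronger: a \emph{uniform static} tax $\tau$ makes every state-action profile cost the same for everyone, so cost comparisons become vacuous and $\nasheq(\calG^\tau)=\nasheq(\calG^{\mathbf 0})$; combined with the monotonicity fact $\nasheq(\calG^T)\subseteq\nasheq(\calG^{\mathbf 0})$ for \emph{every} taxation scheme $T$ (a loser who can deviate to win still benefits under any taxes), this gives $\eni\neq\emptyset$ iff $\nasheq_\Upsilon(\calG^{\mathbf 0})\neq\emptyset$, i.e.\ a single call to cost-free \textsc{E-Nash}. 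Note also that this equivalence settles the static-restricted variant at the same time, whereas your remark that static schemes are ``a special case'' of dynamic ones only shows that a static witness implies a dynamic one, not that the static-restricted decision problem admits the same algorithm; your canonical witness is inherently dynamic, so the ``even when $\mathscr{T}$ is restricted to static schemes'' clause is left unproved.
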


\begin{proof}
For membership, we can check whether $\Upsilon$ is satisfied on any Nash equilibrium of the cost-free concurrent game $\calG^{\mathbf 0}$ obtained from $\calG$ by effectively removing its cost function using a static taxation scheme which makes all costs uniform for all agents. This then becomes the \textsc{E-Nash} problem, known to be \textsc{2ExpTime}-complete. The answer will be ``yes'' iff $\Upsilon$ is satisfied on some Nash equilibrium of $\calG^{\mathbf 0}$; and if the answer is ``yes'', then observing that $\nasheq(\calG^{T})  \subseteq \nasheq(\calG^{\mathbf 0})$ for all taxation schemes $T \in \mathscr{T}$ \cite{wooldridge:2013a}, the given LTL goal~$\Upsilon$ can be E-Nash implemented in $\calG$. For hardness, we can reduce the problem of checking whether a cost-free concurrent game $G$ has a Nash equilibrium (Theorem~\ref{thm:ne}). Simply ask whether $\Upsilon = \top$ can be E-Nash implemented in $\calG^{\mathbf 0}$.

For the second part of the result, observe that the reduction above only involves removing the costs from the game and checking the answer to \textsc{E-Nash}, which can be done using a simple static taxation scheme. Hardness follows in a similar manner.
\end{proof}

\vspace*{1ex}\noindent\textbf{A-Nash Implementation:} The universal counterpart of E-Nash implementation is \emph{A-Nash Implementation}.
We say that $\Upsilon$ is \emph{A-Nash implemented} by $T$ in $\calG$ if we have both 1) $\Upsilon$ is E-Nash implemented by $T$ in game~$\calG$; and 2) $\nasheq(\calG^T) = \nasheq_\Upsilon(\calG^T)$. 
We thus define $\ani$ as follows:
\[
\ani = 
    \set{T \in \mathscr{T} 
        \mid \nasheq(\calG^T) = \nasheq_\Upsilon(G^T) \neq \emptyset} 
\]
The decision problem is then:
\begin{quote}
\underline{\textsc{A-Nash Implementation}}:\\
\emph{Given}: Game $\calG$, LTL goal $\Upsilon$.\\
\emph{Question}: Is it the case that $\ani \neq \emptyset$?
\end{quote}

\begin{figure}
        \begin{subfigure}{.5\textwidth}
            \centering
            \scalebox{0.9}{\begin{tikzpicture}[->,>=stealth',shorten >=1pt,auto,node distance=1cm, thick, initial text = ] 
	
 \node[state, initial, initial where=above] (s0) {$s_0$}; 
	\node[state, label={above:$\{p,q\}$}] (s1) at (-3.5,0){$s_1$}; 
	\node[state, label={above:$\{p\}$}] (s2) at (3.5,0) {$s_2$}; 
	\node[state, label={left:$\{q\}$}] (s3) at (0,-2) {$s_3$};
	
	\draw   (s0) edge [bend right, above, align=center] node {$\acp_{2}:(2,0)$\\$\acp_{3}:(0,2)$} (s1)   	
	        (s0) edge [bend left, above] node {$\acp_{1}:(0,0)$} (s2)
	        (s0) edge [bend left] node [pos=0.5] {$\acp_{4}:(2,2)$} (s3)
	        (s1) edge node [pos=0.5, below] {$\vac:(0,0)$} (s0)
	        (s2) edge node [pos=0.5, below] {$\vac:(0,0)$} (s0)
	        (s3) edge [bend left] node {$\vac : (0,0)$} (s0);
\end{tikzpicture}}
            \caption{}
            \label{fig:dynamicvsstatic2}
        \end{subfigure}
        \begin{subfigure}{.4\textwidth}
            \centering
            \scalebox{0.9}{\begin{tikzpicture}[->,>=stealth',shorten >=1pt,auto,node distance=1cm, thick, initial text = ] 

	\node[state, initial, initial where=above, label={below:$\vac:(0,0)$}] (q0)   {$q_T^0$}; 
	\node[state, label={below:$\vac:(0,0)$}] (q1) at (2.5,0) {$q_T^1$}; 
	\node[state, label={below:$\vac:(3,3)$}] (q2) at (-2.5,0) {$q_T^2$}; 
	
	\draw   (q0) edge [bend left] node [pos=0.5, above] {$\acp_2,\acp_3$} (q1)
	        (q0) edge node [pos=0.4, above] {$\acp_1,\acp_4$} (q2)
                (q1) edge [bend left] node [pos=0.5, above] {$\vac$} (q0)
	        (q2) edge [loop above] node [pos=0.5, left=1mm] {$\vac$} (q2);
\end{tikzpicture}}
            \vspace{1.5em}
            \caption{}
            \label{fig:dynamicincentive}
        \end{subfigure}
        \caption{(a): A two-agent concurrent game $\calG$ with action sets $\ac_1 = \set{a,b}$ and $\ac_2 = \set{c,d}$ and goals $\gamma_1 = \gamma_2 = \always \eventually p$, where we let $\acp_{1} = (a,c),\acp_{2} = (a,d),\acp_{3} = (b,c),\acp_{4} = (b,d)$. Cost vectors associated with sets denote that all action profiles within the set are assigned those costs. (b): A dynamic taxation scheme that could be imposed on the agents in the game from (a). Labels below the states represent a static taxation scheme that applies a uniform tax for all agents and all action profiles.}
\end{figure}
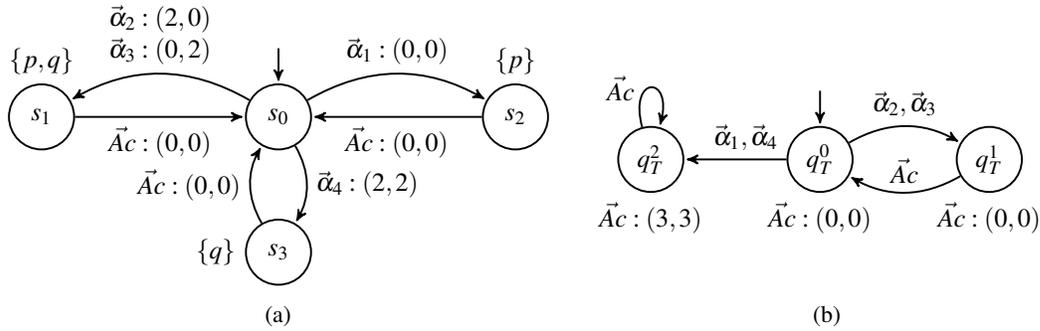

The following result shows that, unlike the case of E-Nash implementation, dynamic taxation schemes are \textit{strictly more powerful} than static taxation schemes for A-Nash implementation. It can be verified that the game in Figure \ref{fig:dynamicvsstatic2}, the taxation scheme in Figure \ref{fig:dynamicincentive}, and the principal's goal being $\Upsilon = G(p \leftrightarrow q)$ are witnesses to this result (see Appendix for the full proof):

\begin{proposition}\label{observation:dynamic>static}
There exists a game $\calG$ and an LTL goal $\Upsilon$ such that $\ani \neq \emptyset$, but not if $\mathscr{T}$ is restricted to static taxation schemes.
\end{proposition}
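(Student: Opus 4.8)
The plan is to prove the two halves separately: (i) the dynamic scheme $T$ of Figure~\ref{fig:dynamicincentive} witnesses $\ani \neq \emptyset$ for the game $\calG$ of Figure~\ref{fig:dynamicvsstatic2} and $\Upsilon = \always(p \leftrightarrow q)$; and (ii) no \emph{static} scheme A-Nash implements $\Upsilon$ in $\calG$. Two structural observations drive everything. First, since every action profile sends each of $s_1,s_2,s_3$ back to $s_0$ at cost $(0,0)$, every run alternates between $s_0$ and $\{s_1,s_2,s_3\}$; hence $\rho \models \Upsilon$ iff $\rho$ never visits $s_2$ or $s_3$, i.e.\ iff the state sequence of $\rho$ is $(s_0 s_1)^\omega$ (only $\acp_2$ or $\acp_3$ is ever played at $s_0$), while $\rho \models \gamma_i = \always\eventually p$ iff $\rho$ visits $\{s_1,s_2\}$ infinitely often. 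Second, limit-average cost and the recurrence conditions ``$p$ infinitely often'' are \emph{prefix-independent}: changing a finite prefix of a run alters neither $\calC_i^\tau(\rho)$ nor $\calC_i^T(\rho)$, and preserves whether $\gamma_i$ and the recurrence version of $\Upsilon$ hold.

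For part (i) I would first establish E-Nash: the profile $\hat\strat$ in which agent~$1$ always plays $a$ and agent~$2$ always plays $d$ produces the run with state sequence $(s_0 s_1)^\omega \models \Upsilon$, along which $T$ stays in $\{q_T^0,q_T^1\}$ and levies no tax; then agent~$2$ attains the maximal utility $1+c_2^*$, and any deviation by agent~$1$ that ever plays $b$ at an $s_0$-step forces $\acp_4$, drives $T$ irrevocably into $q_T^2$, and levies tax $(3,3)$ at all later steps, so that its utility drops to $\leq 1+c_1^*-3 = 0$, below its equilibrium utility $2$; deviations that keep $a$ at every $s_0$-step leave agent~$1$'s cost at $1$. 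So $\hat\strat \in \nasheq_\Upsilon(\calG^T)$. Next, to show every NE of $\calG^T$ satisfies $\Upsilon$, suppose $\strat = (\sigma_1,\sigma_2) \in \nasheq(\calG^T)$ has a run visiting $s_2$ or $s_3$; then $T$ reaches $q_T^2$ and loops there, so both agents incur tax $3$ at cofinitely many steps, forcing $\calC_i^T(\rho(\strat)) \geq 3$ and utilities $\leq 0$. But agent~$1$ can deviate to the finite-state strategy that simulates $\sigma_2$ and always plays the action complementary to $\sigma_2$'s current output, so that the joint action at each $s_0$-step is $\acp_2$ or $\acp_3$; this produces the run $(s_0 s_1)^\omega$, never triggers $q_T^2$, makes agent~$1$ a winner, and costs agent~$1$ at most $1$, hence gives utility $\geq 2 > 0$ --- a beneficial deviation, contradicting that $\strat$ is a NE. Thus $\nasheq(\calG^T) = \nasheq_\Upsilon(\calG^T) \neq \emptyset$ and $T \in \ani$.

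For part (ii), fix any static $\tau \in \mathscr{T}$; if $\nasheq(\calG^\tau) = \emptyset$ the defining condition of $\ani$ already fails, so pick any $\strat = (\sigma_1,\sigma_2) \in \nasheq(\calG^\tau)$ and build $\strat' = (\sigma_1',\sigma_2')$, where $\sigma_i'$ prepends to $\sigma_i$ a two-state ``detour gadget'': from its initial state it plays $a$ (if $i=1$) or $c$ (if $i=2$) and moves, on any input, to a second state where it again plays $a$/$c$ and then moves, on any input, to $\sigma_i$'s initial state $q_i^0$, behaving like $\sigma_i$ thereafter. The two detour steps force $\acp_1$ twice, so the run of $\strat'$ begins $s_0,s_2$ and then continues as $\rho(\strat)$; it visits $s_2$, hence falsifies $\always(p \leftrightarrow q)$, while prefix-independence gives $u_i^\tau(\strat') = u_i^\tau(\strat)$. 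To verify $\strat'$ is still a NE, note that $\sigma_2'$ ignores agent~$1$'s moves during the detour and is back in $q_2^0$ with the arena back in $s_0$ at step~$2$ regardless of agent~$1$; hence any deviation $\hat\sigma_1$ of agent~$1$ against $\sigma_2'$ yields a run whose suffix from step~$2$ equals $\rho(\hat\sigma_1',\sigma_2)$ for the induced continuation $\hat\sigma_1'$, so by prefix-independence and the fact that $\strat$ is a NE, $u_1^\tau(\rho(\hat\sigma_1,\sigma_2')) = u_1^\tau(\rho(\hat\sigma_1',\sigma_2)) \leq u_1^\tau(\strat) = u_1^\tau(\strat')$; symmetrically for agent~$2$. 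Therefore $\strat' \in \nasheq(\calG^\tau) \setminus \nasheq_\Upsilon(\calG^\tau)$, so $\tau \notin \ani$, and since $\tau$ was arbitrary, no static scheme A-Nash implements $\Upsilon$; with part~(i) this proves the proposition. (Note that this argument genuinely uses statelessness of $\tau$: for a dynamic scheme the taxation machine would be displaced by the detour, so $u_i$ need not be preserved --- consistent with part~(i).)

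The step I expect to be the main obstacle is the final claim of part~(ii): that the detoured profile $\strat'$ is genuinely a Nash equilibrium. This rests on the ``clean reset'' of the gadget together with the arena's structure (every one of $s_1,s_2,s_3$ returning to $s_0$ at zero cost), which is exactly what lets each deviation against $\sigma_i'$ be mirrored by an equal-utility deviation against $\sigma_i$, so that the cost-irrelevant finite detour through $s_2$ can be spliced into any equilibrium without destabilising it. A secondary technical point is checking that agent~$1$'s ``complementary'' deviation in part~(i) is a legitimate finite-state strategy --- it simulates the finite-state $\sigma_2$ --- and that a run of $\calG^T$ touching $s_2$ or $s_3$ indeed pins both agents' utilities at $\leq 0$.
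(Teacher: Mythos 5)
Your proof is correct and follows essentially the same route as the paper's: the same game, the same dynamic taxation scheme, and the same objective $\Upsilon = \always(p \leftrightarrow q)$, with the impossibility for static schemes resting on the prefix-independence of limit-average costs and of the goals $\always\eventually p$. Your explicit two-step detour gadget and the complementary-strategy deviation are rigorous elaborations of steps the paper's proof only sketches, but they do not change the underlying argument.
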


Before proceeding with the \textsc{A-Nash Implementation} problem, we will need to introduce some additional terminology and concepts, beginning first with deviation graphs, paths, and cycles. A \textit{deviation graph} is a directed graph $\Gamma = (\calV, E)$, where $\calV \subseteq \Sigma$ is a set of nodes which represent strategy profiles in $\Sigma$ and $E \subseteq \set{(\strat,\strat') \in \calV \times \calV \mid \strat \to_i \strat' \mbox{ for some } i \in \calN}$ is a set of directed edges between strategy profiles that represent initial deviations. Additionally, we say that a dynamic taxation scheme $T$ \textit{induces} a deviation graph $\Gamma = (\calV,E)$ if for every $(\strat,\strat') \in \calV \times \calV$, it holds that $\strat' \pref_i^T \strat$ for some $i \in \calN$ if and only if $(\strat,\strat') \in E$. In other words, if the edges in a deviation graph precisely capture all of the beneficial deviations between its nodes under $T$, then the deviation graph is said to be induced by $T$.\footnote{This definition implies that a taxation scheme may induce many possible deviation graphs in general, depending on the nodes selected to be part of the graph.} Then, a \textit{deviation path} is simply any path $P = (\strat^1, \ldots,\strat^m)$ within a deviation graph $\Gamma$ where $(\strat^j,\strat^{j+1}) \in E$ for all $j \in \set{1,\ldots,m-1}$. 

Because the principal is only able to observe the actions taken by the agents and not their strategies directly, any taxation scheme that changes the cost of some strategy profile $\strat$ will also change the cost of all strategy profiles that are indistinguishable from $\strat$ by the same amount. 
This naturally suggests that we modify the concept of a deviation path to take indistinguishability into account. To this end, we say that a sequence of runs $P_o = (\rho^1,\rho^2,\ldots,\rho^m)$ is an \textit{observed deviation path} in a deviation graph $\Gamma = (\calV,E)$ if there exists an \textit{underlying tuple} $(\strat^1, \strat^2, \ldots, \strat^m)$ such that for all $j \in \set{1,\ldots,m}$, it holds that 1) $\rho^j = \rho(\strat^j)$, and 2) if $j < m$, then $(\strat^j,\strat^{j+1'}) \in E$ for some $\strat^{j+1'}$ such that $\rho(\strat^{j+1'}) = \rho(\strat^{j+1})$. 
Then, a \textit{deviation cycle} is a deviation path $(\strat^1,\ldots,\strat^m)$ where $\rho(\strat^1) = \rho(\strat^m)$.
A deviation path $P = (\strat^1, \strat^2, \ldots, \strat^m)$ is said to \textit{involve} an agent $i$ if $\strat^j \to_i \strat^{j+1}$ for some $j \in \set{1,\ldots,m-1}$ and similarly, an observed deviation path $P_o$ in a deviation graph involves agent $i$ if the analogous property holds for all of its underlying sets. Given a game $\mathcal{G}$ and a set of strategy profiles $X$, a taxation scheme $T$ \textit{eliminates} $X$ if $NE(\mathcal{G}^T) \cap X = \emptyset$. Finally, a set of strategy profiles $X$ is said to be \textit{eliminable} if there exists a taxation scheme that eliminates it. With this, we can characterise the conditions under which a finite set of strategy profiles is eliminable:

\begin{proposition}\label{prop:elimset}
    Let $\calG$ be a game and $X \subset \Sigma$ be a finite set of strategy profiles in $\calG$. Then, $X$ is eliminable if and only if there exists a finite deviation graph $\Gamma = (\calV, E)$ that satisfies the following properties: 1) For every $\strat \in X$, there is some $\strat' \in \calV$ such that $(\strat,\strat') \in E$; and 2) Every deviation cycle in $\Gamma$ involves at least two agents.
\end{proposition}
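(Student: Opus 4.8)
The plan is to prove the two implications separately, in both directions relying on the standing fact that achieving one's goal lexicographically dominates cost minimisation, so that under any taxation scheme no beneficial deviation can move the deviating agent out of the winning set; hence every beneficial deviation in $\calG^T$ (for any $T$) is in particular an initial deviation and may legitimately appear as an edge of a deviation graph.

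For the forward implication, suppose $T$ eliminates $X$, so each $\strat \in X$ has a beneficial deviation $\strat \to_{i(\strat)} \strat'_{\strat}$ in $\calG^T$; fix one such for every $\strat$, and set $\calV = X \cup \{\strat'_{\strat} : \strat \in X\}$ and $E = \{(\strat,\strat'_{\strat}) : \strat \in X\}$. Property~1 holds by construction. For Property~2, note that an edge $(\strat,\strat'')$ with $\strat \to_j \strat''$ forces $\strat$ and $\strat''$ to differ in exactly the $j$-th component, so each edge carries a \emph{unique} agent label; hence a deviation cycle that fails to involve two agents has all of its edges labelled by one fixed agent $i$, and each such edge is one of the chosen beneficial deviations for $i$. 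Telescoping the resulting strict inequalities around the cycle yields $u_i^T(\rho(\strat^1)) < u_i^T(\rho(\strat^2)) < \cdots < u_i^T(\rho(\strat^m))$, which contradicts $\rho(\strat^1) = \rho(\strat^m)$. So every deviation cycle involves at least two agents.

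For the converse, given $\Gamma = (\calV, E)$ satisfying Properties~1 and~2, it suffices to build $T$ so that every $\strat \in X$ admits a beneficial deviation in $\calG^T$. Use Property~1 to pick one out-edge $e_{\strat} = (\strat, \strat'_{\strat})$ with label $i_{\strat}$ for each $\strat \in X$, and let $F$ be the resulting set. Since $e_{\strat}$ is an initial deviation, the agent $i_{\strat}$ either is a loser at $\strat$ and a winner at $\strat'_{\strat}$---and then $e_{\strat}$ is beneficial for $i_{\strat}$ under \emph{any} $T$---or retains its winning status, in which case $e_{\strat}$ is beneficial for $i_{\strat}$ in $\calG^T$ precisely when $i_{\strat}$'s taxed limit-average cost strictly drops from $\rho(\strat)$ to $\rho(\strat'_{\strat})$. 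It is therefore enough to find, for each agent $i$, a potential $\phi_i$ on the (finitely many, ultimately periodic) runs induced by profiles of $\calV$ with $\phi_i(\rho) \ge \calC_i(\rho)$ for all such $\rho$ (so that only non-negative taxes are needed) and $\phi_i(\rho(\strat'_{\strat})) < \phi_i(\rho(\strat))$ for every ``cost-type'' edge of $F$ labelled $i$; such a $\phi_i$ exists iff the corresponding constraint digraph on runs is acyclic, and this is exactly where Property~2 is used. Finally one realises the $\phi_i$ by an explicit finite-state taxation machine: since the relevant runs are ultimately periodic and finitely many, after reading a bounded prefix the machine can determine which of them (if any) the play follows and thereafter levy a constant uniform per-step tax on the loop that forces agent $i$'s limit-average taxed cost to equal $\phi_i(\rho)$.

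The main obstacle is the step that converts Property~2 into acyclicity of the per-agent constraint digraph on runs. Because a taxation scheme observes only action profiles, it cannot separate strategy profiles that induce the same run, so the constraints $T$ must meet live at the level of runs---equivalently, of observed deviation paths---rather than of profiles: a directed cycle in the constraint digraph corresponds to a single-agent chain of cost-type initial deviations between runs, i.e.\ to an observed deviation cycle through a single agent. One has to argue, using the ``underlying tuple'' machinery introduced just before the proposition (and possibly after pruning or re-choosing $F$ so that the induced run-level digraph inherits acyclicity), that Property~2 forbids such a configuration. By comparison, the ``free'' loser-to-winner edges, the telescoping argument in the forward direction, and the (tedious but standard) construction of a finite-state machine imposing prescribed limit-average costs on a finite family of ultimately periodic runs are all routine.
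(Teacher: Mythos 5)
Your forward direction is complete and is essentially the paper's own argument (the paper runs it contrapositively): build the graph from one beneficial deviation per element of $X$, note that each such edge is an initial deviation because goal satisfaction lexicographically dominates cost, and kill single-agent deviation cycles by telescoping the strict utility inequalities and using that utilities depend only on runs. Your backward direction also follows the paper's plan: pick one out-edge per $\strat \in X$, treat loser-to-winner edges as free, and for the remaining edges construct per-agent cost levels on the finitely many ultimately periodic runs of $\calV$ that strictly decrease along each chosen edge, realised by a finite-state taxation machine that separates the runs on a finite prefix and then fixes the limit-average tax on the periodic part. The paper's ``potential'' is explicit rather than abstract: agent $i$ is charged at least $(\ell_i - d_i(\rho(\strat)))\cdot(c_i^*+1)$ on $\rho(\strat)$, where $d_i(\rho(\strat))$ is the length of the longest single-agent \emph{observed} deviation path out of $\rho(\strat)$; the well-definedness of $\ell_i$ and $d_i$ is exactly your acyclicity requirement in disguise.

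The step you leave open --- deducing acyclicity of the run-level constraint digraph from Property~2 --- is the one genuine gap, and your instinct that it is the crux is correct. Property~2 quantifies over deviation cycles, i.e.\ paths whose consecutive edges literally compose in $\Gamma$, whereas the constraints a taxation scheme must satisfy live on runs, and cycles there correspond to \emph{observed} deviation cycles, in which one may hop between indistinguishable profiles between successive edges. These are not automatically the same: take two runs $\rho^1 \neq \rho^2$, profiles $\strat^1,\strat^{1'}$ generating $\rho^1$ and $\strat^2,\strat^{2'}$ generating $\rho^2$, and the edge set $\{(\strat^1,\strat^{2'}),(\strat^2,\strat^{1'})\}$ with both edges labelled by the same agent $i$ and both of cost type; this graph contains no deviation cycle (no two edges compose), yet it forces $\calC_i^T(\rho^2) < \calC_i^T(\rho^1) < \calC_i^T(\rho^2)$. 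So you must either read Property~2 as ranging over observed deviation cycles or prove the bridge for the specific graph you construct; ``pruning or re-choosing $F$'' does not obviously escape the configuration above. For what it is worth, the paper's own proof crosses this bridge only by assertion (it declares $\ell_i$ and $d_i$ finite ``because no deviation cycle involves only one agent''), so your write-up is no less complete than the published argument at this point --- but the step does need an argument, and as stated your proof of the backward implication is not finished without it.
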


\begin{proof}[Proof Sketch.]
The forward direction follows by observing that if all deviation graphs fail to satisfy at least one of the two properties, then every deviation graph will either fail to eliminate some $\strat \in X$ if induced, or will not be inducible by any dynamic taxation scheme. The backward direction can be established by constructing a dynamic taxation scheme $T^{\Gamma}$ that induces a deviation graph $\Gamma$ satisfying the two properties. Using these properties, it follows that $T^{\Gamma}$ eliminates $X$.
\end{proof}

To conclude our study of dynamic taxation schemes, we present a characterisation of the A-Nash implementation problem.\footnote{Note that, in general, Proposition \ref{prop:elimset} cannot be directly applied to Theorem \ref{thm:a-nashcharacterisation}, because it is assumed that the set to be eliminated is finite, whereas $\nasheq_{\neg\Upsilon}(\calG^{\boldsymbol{0}})$ is generally infinite. However, this can be reconciled if some restriction is placed on the agents' strategies so that $\Sigma$ is finite, which is the case in many game-theoretic situations of interest, e.g., in games with memoryless, or even bounded memory, strategies -- both used to model bounded rationality.}

\begin{theorem}\label{thm:a-nashcharacterisation}
Let $\calG$ be a game and $\Upsilon$ be an LTL formula. Then $\ani \neq \emptyset$ if and only if the following conditions hold:
\begin{enumerate}
    \item $\eni \neq \emptyset$;
    \item $\nasheq_{\neg\Upsilon}(\calG^{\boldsymbol{0}})$ is eliminable.
\end{enumerate}
\end{theorem}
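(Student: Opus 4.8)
The two directions call for very different work, so I would handle them separately.

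For the left-to-right direction the plan is essentially to unwind the definitions. Suppose $T\in\ani$, so that $\nasheq(\calG^T)=\nasheq_\Upsilon(\calG^T)\neq\emptyset$. Then $T\in\eni$ directly, which is condition~(1). For condition~(2) I would show that $T$ itself eliminates $\nasheq_{\neg\Upsilon}(\calG^{\mathbf 0})$: every $\strat\in\nasheq(\calG^T)=\nasheq_\Upsilon(\calG^T)$ satisfies $\rho(\strat)\models\Upsilon$ and therefore cannot lie in $\nasheq_{\neg\Upsilon}(\calG^{\mathbf 0})$, so $\nasheq(\calG^T)\cap\nasheq_{\neg\Upsilon}(\calG^{\mathbf 0})=\emptyset$ and the set is eliminable.

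For the right-to-left direction the plan is to fuse the two witnesses given by the hypotheses into a single dynamic scheme. From (1), together with the reasoning in the proof of Theorem~\ref{thm:e-nashcomplexity} (which shows $\eni\neq\emptyset$ iff $\nasheq_\Upsilon(\calG^{\mathbf 0})\neq\emptyset$), I would fix a profile $\strat^*\in\nasheq_\Upsilon(\calG^{\mathbf 0})$; since $\strat^*$ is a finite-memory profile its run $\rho(\strat^*)$ is ultimately periodic (a lasso). From (2), I would fix a scheme $T'$ with $\nasheq(\calG^{T'})\cap\nasheq_{\neg\Upsilon}(\calG^{\mathbf 0})=\emptyset$. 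I would then build $T$ to run a copy of $T'$ in parallel with a lasso-tracking finite-state machine that checks whether the observed action-profile history is still a prefix of $\rho(\strat^*)$: while it is, $T$ imposes zero tax; from the first step at which the play departs from $\rho(\strat^*)$ onwards, $T$ imposes at every step the tax $T'$ would impose plus the fixed extra amount $c_i^*$ for each agent~$i$. (Pleasantly, this works uniformly, with no need to split on whether $\calG^{T'}$ has an equilibrium.) The verification then has three steps. First, a bookkeeping step: because only finitely many steps of any run sit on the lasso, $\calC_i^T(\rho)=\calC_i^{T'}(\rho)+c_i^*$ for every run $\rho\neq\rho(\strat^*)$, while $\calC_i^T(\rho(\strat^*))=\calC_i(\rho(\strat^*))\le c_i^*$. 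Second, $\strat^*\in\nasheq(\calG^T)$: a winner at $\strat^*$ cannot benefit, since any distinguishable deviation yields a run $\rho\neq\rho(\strat^*)$ where either $i$ becomes a loser (strictly worse) or $i$ still wins but now pays $\calC_i^T(\rho)\ge c_i^*\ge\calC_i^T(\rho(\strat^*))$; and a loser at $\strat^*$ cannot become a winner because $\strat^*\in\nasheq(\calG^{\mathbf 0})$. Third, $T$ still eliminates $\nasheq_{\neg\Upsilon}(\calG^{\mathbf 0})$: for any $\strat$ in that set we have $\rho(\strat)\neq\rho(\strat^*)$, and a beneficial deviation for some $i$ in $\calG^{T'}$ survives in $\calG^T$ — if the deviation target is not $\rho(\strat^*)$ the relevant utilities are all shifted down by the same constant $c_i^*$, preserving the strict preference; if the target is $\rho(\strat^*)$ it only gets cheaper under $T$ while the status quo gets dearer. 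Combining the last two steps with $\nasheq(\calG^T)\subseteq\nasheq(\calG^{\mathbf 0})$ gives $\strat^*\in\nasheq(\calG^T)\subseteq\nasheq_\Upsilon(\calG^{\mathbf 0})$, so $\nasheq(\calG^T)=\nasheq_\Upsilon(\calG^T)\neq\emptyset$ and $T\in\ani$.

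The part I expect to be delicate is getting the off-lasso behaviour of $T$ right. Off the lasso, $T$ must at once render every escape from $\rho(\strat^*)$ unattractive (which pushes costs up) and yet leave intact the beneficial deviations that $T'$ has engineered against each bad equilibrium (which are pinned down by cost \emph{differences} that a crude ``inflate every cost to a huge value'' scheme would flatten away). The realisation that reconciles the two is that adding a single per-agent constant to $T'$'s taxes off the lasso is a monotone, strict-inequality-preserving reparametrisation of off-lasso utilities, and a shift by exactly $c_i^*$ already exceeds the untaxed cost $\calC_i(\rho(\strat^*))$ of staying; so one construction discharges both obligations. A lesser technical point is checking that this $T$ is a bona fide finite-state dynamic taxation scheme, which is exactly where the ultimate periodicity of $\rho(\strat^*)$ is needed.
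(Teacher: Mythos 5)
Your proof is correct, but the interesting (right-to-left) direction is argued quite differently from the paper. The paper's proof takes a witness $T$ for eliminability that is \emph{assumed} to affect only the limit-average costs of profiles in $\nasheq_{\neg\Upsilon}(\calG^{\boldsymbol{0}})$, superimposes on it the uniform static scheme $\tau$ that flattens all costs to a common constant $\hat{c}$ (so that $\nasheq(\calG^\tau)=\nasheq(\calG^{\boldsymbol{0}})$), and concludes that the sum $T^*$ both preserves some good equilibrium and eliminates the bad ones; the existence of such a ``surgical'' $T$ is justified only by an appeal to distinguishability of good from bad equilibria. You instead take an arbitrary eliminating scheme $T'$ and an arbitrary $\strat^*\in\nasheq_\Upsilon(\calG^{\boldsymbol{0}})$, protect the ultimately periodic run $\rho(\strat^*)$ with zero tax, and shift every off-lasso run by $T'$'s tax plus the constant $c_i^*$; the verification then rests on prefix-independence of the liminf-average and on the observation that a uniform additive shift preserves the strict preferences $T'$ engineered, while making every escape from $\rho(\strat^*)$ at least as costly as staying. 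This buys you something concrete: you never need the unsupported ``only affects bad profiles'' assumption, so your argument actually closes a gap the paper leaves open, at the price of a slightly more involved construction (the lasso tracker composed with $T'$) and of singling out one good equilibrium rather than preserving $\nasheq(\calG^{\boldsymbol{0}})$ wholesale. Your left-to-right direction is also sharper than the paper's: you show the $\ani$ witness itself eliminates $\nasheq_{\neg\Upsilon}(\calG^{\boldsymbol{0}})$ since its equilibria all satisfy $\Upsilon$, where the paper only asserts the contrapositive as ``clear.'' Two small expository points: when showing $\strat^*\in\nasheq(\calG^T)$ you should also dispose of the case of a loser who deviates to another losing run hoping for a lower cost (it follows from the same inequality $\calC_i^T(\rho)\ge c_i^*\ge\calC_i^T(\rho(\strat^*))$ you use for winners), and you should note that at the very step where play departs the lasso the machine's output is determined by the observed state-action profile, which is harmless since one step does not affect the limit average.
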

 \begin{proof}
 For the forward direction, it follows from the definition of the problem that if $\eni = \emptyset$, then $\ani = \emptyset$. Moreover, it is also clear that if $\nasheq_{\neg\Upsilon}(\calG^{\boldsymbol{0}})$ is not eliminable, then it is impossible to design a (dynamic) taxation scheme such that only good equilibria remain in the game and hence, $\ani = \emptyset$.

 For the backward direction, suppose that the two conditions hold and let $T$ be a taxation scheme that only affects the limiting-average costs incurred by agents under strategy profiles in $\nasheq_{\neg\Upsilon}(\calG^{\boldsymbol{0}})$, and eliminates this set. Such a taxation scheme is guaranteed to exist by the assumption that condition $(2)$ holds and because it is known that no good equilibrium is indistinguishable from a bad one. Now consider a static taxation scheme $\tau$ such that $c_i(s,\acp) + \tau_i(s,\acp) = \hat{c}$ for all $i \in \calN$, $(s,\acp) \in \calS \times \vac$, and some $\hat{c} \geq \max_{i \in \calN} c_i^*$. Combining $\tau$ with $T$ gives us a taxation scheme $T^*$ such that for each state $q \in Q_{T^*} = Q_T$ and $(s,\acp) \in \calS \times \vac$, we have $do_{T^*}(q)(s,\acp) = do_{T}(q)(s,\acp) + \tau(s,\acp)$.  Now, because $T$ eliminates $\nasheq_{\neg\Upsilon}(\calG^{\boldsymbol{0}})$, and $\nasheq(\calG^\tau) = \nasheq(\calG^{\boldsymbol{0}})$, it follows that $T^*$ eliminates $\nasheq_{\neg\Upsilon}(\calG^{\boldsymbol{0}})$. Finally, note that because the satisfaction of an LTL formula on a given run is solely dependent on the run's trace, it follows that all good equilibria, i.e., strategy profiles in $\nasheq_{\Upsilon}(\calG^{\boldsymbol{0}})$, are distinguishable from all bad equilibria, so we have $\nasheq_{\Upsilon}(\calG^{\boldsymbol{0}}) \cap \nasheq(\calG^{T^*}) \neq \emptyset$.
\end{proof}

It is straightforward to see that \textsc{A-Nash Implementation} is 2EXPTIME-hard via a simple reduction from the problem of checking whether a Nash equilibrium exists in a concurrent game -- simply ask if the formula $\top$ can be A-Nash implemented in $\calG^{\boldsymbol{0}}$. However, it is an open question whether a matching upper bound exists and we conjecture that it does not. This problem is difficult primarily for two reasons. Firstly, it is well documented that Nash equilibria may require infinite memory in games with lexicographic $\omega$-regular and mean-payoff objectives \cite{chatterjee:2005a}, and the complexity of deciding whether a Nash equilibrium even exists in games with our model of preferences has yet to be settled \cite{gutierrez:2020a}. Secondly, Theorem \ref{thm:a-nashcharacterisation} and Proposition \ref{prop:elimset} suggest that unless the strategy space is restricted to a finite set, a taxation scheme that A-Nash implements a formula may require reasoning over an infinite deviation graph, and hence require infinite memory. Nevertheless, our characterisation under such restrictions provides the first step towards understanding this problem in the more general setting.

\section{Related Work and Conclusions}
\label{sec:conclusion}

This work was motivated by~\cite{wooldridge:2013a}, and based on that work, presents four main contributions: the introduction of {\em static and dynamic} taxation schemes as an extension to concurrent games expanding the model in (one-shot) Boolean games~\cite{wooldridge:2013a, harrenstein:2014a, harrenstein:2017a}; a study of the \textit{complexity} of some of the most relevant computational decision problems building on previous work in rational verification~\cite{gutierrezetal:2017a,DBLP:journals/ai/GutierrezNPW20,gutierrez:2020a}; evidence (formal proof) of the strict \textit{advantage of dynamic taxation schemes} over static ones, which illustrates the role of not just observability but also \textit{memory} to a principal's ability to (dis)incentivise certain outcomes \cite{grossman1992analysis,holmstrom1982moral}; and a full characterisation of the \textit{eliminability of sets of strategy profiles} under dynamic taxation schemes and the A-Nash implementation problem. 

The incentive design problem has been studied in many different settings, and \cite{ratliff2019perspective} group existing approaches broadly into those from the economics, control theory, and machine learning communities. However, more recent works in this area adopt multi-disciplinary methods such as automated mechanism design~\cite{parkes2010dynamic,mguni2019efficient,shen2019automated,balaguer2022good}, which typically focus on the problem of constructing incentive-compatible mechanisms to optimise a particular objective such as social welfare. Other approaches in this area reduce mechanism design to a program synthesis problem~\cite{narayanaswamiautomating} or a satisfiability problem for quantitative strategy logic formulae~\cite{maubert2021strategic,mittelmann2022automated}. The notion of dynamic incentives has also been investigated in (multi-agent) learning settings~\cite{centeno2011using,mguni2019coordinating,ratliff2020adaptive,yang2021adaptive,elbarbari2022framework}. These works focus solely on adaptively modifying the rewards for quantitative reward-maximising agents. In contrast, our model of agent utilities more naturally captures fundamental constraints on the principal's ability to (dis)incentivise certain outcomes due to the lexicographic nature of agents' preferences \cite{brauning2017lexicographic}.

Another area closely related to incentives is that of norm design \cite{mahmoud2014review}. Norms are often modelled as the encouragement or prohibition of actions that agents may choose to take by a regulatory agent. 
The most closely related works in this area are those of~\cite{huang2016normative,perelli2019enforcing,alechina2022automatic}, who study the problem of synthesising dynamic norms in different classes of concurrent games to satisfy temporal logic specifications. Whereas norms in these frameworks have the ability to \textit{disable} actions at runtime, our model confers only the power to \textit{incentivise} behaviours upon the principal. Finally, other studies model norms with violation penalties, but differ from our work in how incentives, preferences, and strategies are modelled~\cite{cardoso2009adaptive,bulling2016norm,dellanna2020runtime}.

In summary, a principal's ability to align self-interested decision-makers' interests with higher-order goals presents an important research challenge for promoting cooperation in multi-agent systems. The present study highlights the challenges associated with incentive design in the presence of constraints on the kinds of behaviours that can be elicited, makes progress on the theoretical aspects of this endeavour through an analysis of taxation schemes, and suggests several avenues for further work. Promising directions include extensions of the game model to probabilistic/stochastic or learning settings, finding optimal complexity upper bounds for the A-Nash implementation problems, and consideration of different formal models of incentives. We expect that this and such further investigations will positively contribute to our ability to develop game-theoretically aware incentives in multi-agent systems.

\bibliographystyle{eptcs}
\bibliography{refs}

\clearpage

\setcounter{proposition}{0}
\setcounter{theorem}{2}
\setcounter{lemma}{0}
\setcounter{corollary}{0}

\clearpage

\section{Supplementary Material}

\begin{proposition}
There exists a game $\calG$ and an LTL goal $\Upsilon$ such that $\ani \neq \emptyset$, but not if $\mathscr{T}$ is restricted to static taxation schemes.
\end{proposition}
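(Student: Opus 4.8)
The plan is to verify that the game $\calG$ of Figure~\ref{fig:dynamicvsstatic2}, the dynamic scheme $T$ of Figure~\ref{fig:dynamicincentive}, and the principal's goal $\Upsilon = \always(p\leftrightarrow q)$ are witnesses. First I would record the structure of $\calG$: every run alternates between $s_0$ and one of $s_1,s_2,s_3$, the choice at each visit to $s_0$ being fixed by which of $\acp_1,\dots,\acp_4$ is played there ($\acp_1\to s_2$; $\acp_2,\acp_3\to s_1$; $\acp_4\to s_3$). Since $p\leftrightarrow q$ holds at $s_0$ and $s_1$ but fails at $s_2$ and $s_3$, a run satisfies $\Upsilon$ iff it never leaves $\{s_0,s_1\}$, and both agents win (i.e.\ satisfy $\always\eventually p$) iff $\{s_1,s_2\}$ is visited infinitely often; note also $c_1^*=c_2^*=2$. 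The argument then has two halves: (A) $T$ A-Nash implements $\Upsilon$; (B) no static scheme does.

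For (A), I would first exhibit a good Nash equilibrium of $\calG^T$: the profile in which agent~1 always plays $a$ and agent~2 always plays $d$, so $\acp_2$ is played at every visit to $s_0$. Its run stays in $\{s_0,s_1\}$, the taxation machine only cycles through $q_T^0,q_T^1$ and levies nothing, both agents win, and one checks directly that no deviation helps --- any deviation that ever plays $b$ (resp.\ $c$) at $s_0$ drives $T$ into $q_T^2$ and hence inflicts a limit-average cost of at least $3$, which is strictly worse. The crux of (A) is to show that \emph{every} Nash equilibrium of $\calG^T$ has its run in $\{s_0,s_1\}$. If $\strat\in\nasheq(\calG^T)$ had a run visiting $s_2$ or $s_3$, then $\acp_1$ or $\acp_4$ is played at some visit to $s_0$, driving $T$ to $q_T^2$ forever after, so each agent incurs limit-average cost $\ge 3$. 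The key lemma is that each agent~$i$ nonetheless has a deviation $\sigma_i'$ keeping the run inside $\{s_0,s_1\}$: against the fixed \emph{finite-state} opponent strategy, agent~$i$ can simulate the opponent's machine and at each visit to $s_0$ best-respond so that the resulting profile is $\acp_2$ or $\acp_3$, never $\acp_1$ or $\acp_4$ (e.g.\ agent~1 plays $b$ exactly when it sees the opponent is about to play $c$, and $a$ otherwise). This deviation produces a winning, tax-free run of cost $\le c_i^*=2$, hence strictly higher utility --- contradicting that $\strat$ is an equilibrium. Together with the good equilibrium above, $\nasheq(\calG^T)=\nasheq_\Upsilon(\calG^T)\ne\emptyset$, so $T\in\ani$.

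For (B), I would fix an arbitrary static scheme $\tau$; if $\nasheq_\Upsilon(\calG^\tau)=\emptyset$ we are done, so assume there is a good equilibrium $\strat$ of $\calG^\tau$. I would then build a ``prefix-padded'' profile $\strat'$ that first plays a fixed two-step prelude --- $\acp_1$ at $s_0$ (sending the run to $s_2$), then an arbitrary profile at $s_2$ (returning to $s_0$) --- and thereafter simulates $\strat$ \emph{while discarding the two prelude action profiles}. Since the run of $\strat$ never leaves $\{s_0,s_1\}$, the run of $\strat'$ visits $s_2$ exactly once, so $\rho(\strat')\not\models\Upsilon$; yet a single visit to $s_2$ changes neither the winners nor the limit-average costs, so $u_i^\tau(\rho(\strat'))=u_i^\tau(\rho(\strat))$ for both agents. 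Finally, because each agent's component of $\strat'$ ignores the prelude, the play of $\strat'$ from time~$2$ on is exactly that of $\strat$; hence any beneficial deviation from $\strat'$ restricts (from time~$2$) to a beneficial deviation from $\strat$, contradicting $\strat\in\nasheq(\calG^\tau)$. So $\strat'$ is a Nash equilibrium of $\calG^\tau$ violating $\Upsilon$, whence $\nasheq(\calG^\tau)\ne\nasheq_\Upsilon(\calG^\tau)$ and $\tau\notin\ani$. As $\tau$ was arbitrary, the witness is established.

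The main obstacle I expect is showing $\strat'$ remains a Nash equilibrium in (B): a naive padding would allow a player to ``cancel'' the prelude by a clever deviation, so the construction must be arranged --- by having each strategy drop the prelude prefix before simulating $\strat$ --- precisely so that deviations from $\strat'$ project back to deviations from $\strat$. The companion delicate point, in (A), is the simulate-the-opponent lemma together with the book-keeping about limit-average costs, where one must use that a finite prefix of actions is cost-irrelevant but an infinite tail spent in $q_T^2$ is not.
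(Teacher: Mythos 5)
Your proof is correct and follows essentially the same route as the paper's: the same witnesses (the game of Figure~\ref{fig:dynamicvsstatic2}, the scheme of Figure~\ref{fig:dynamicincentive}, and $\Upsilon = \always(p\leftrightarrow q)$), with the dynamic scheme working because any visit to $s_2$ or $s_3$ traps the tax machine in $q_T^2$ and creates a beneficial deviation, and static schemes failing by prefix-independence of limit-average costs and of $\always\eventually p$. You merely make explicit two steps the paper leaves implicit --- the simulate-the-finite-state-opponent construction of the deviation, and the careful prefix-padding argument showing the perturbed profile remains an equilibrium --- both of which are sound.
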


\begin{proof}
Consider the concurrent game $\calG$ in Figure \ref{fig:dynamicvsstatic2}. Intuitively, both agents desire to always eventually visit either $s_1$ or $s_2$. Suppose that the principal's objective is $\Upsilon = G(p \leftrightarrow q)$, i.e., they would like the agents to never visit $s_2$ or $s_3$.
Firstly, observe that there is no \textit{static} taxation scheme which can A-Nash implement $\Upsilon$, as any modification to the costs of the game will not eliminate any Nash equilibria where the agents visit $s_2$ or~$s_3$ a finite number of times. This is due to the prefix-independence of costs in infinite games with limiting-average payoffs \cite{ummels2011complexity}.
However, the dynamic taxation scheme depicted in Figure \ref{fig:dynamicincentive} A-Nash implements $\Upsilon$. To see this, observe that for any strategy profile that visits $s_2$ or $s_3$ a finite number of times, there exists a deviation for some agent to ensure that $s_2$ and $s_3$ are never visited. Such a deviation will result in all agents $i \in \set{1,2}$ satisfying their goals $\gamma_i$ and strictly reducing their average costs from at least $c_i^*+1$ to some value strictly below this. This constitutes a beneficial deviation and hence, there is no Nash equilibrium under $T$ that does not satisfy $\Upsilon$. Moreover, any strategy profile $\strat$ that leads to the sequence of states $s(\rho(\strat),0:) = (s_0 s_1)^{\omega}$ is a Nash equilibrium of $\calG^T$ and hence goal $\Upsilon$ is A-Nash implemented by $T$ in this game. 
\end{proof}

\begin{proposition}
    Let $\calG$ be a game and $X \subset \Sigma$ be a finite set of strategy profiles in $\calG$. Then, $X$ is eliminable if and only if there exists a finite deviation graph $\Gamma = (\calV, E)$ that satisfies the following properties: 1) For every $\strat \in X$, there is some $\strat' \in \calV$ such that $(\strat,\strat') \in E$; and 2) Every deviation cycle in $\Gamma$ involves at least two agents.
\end{proposition}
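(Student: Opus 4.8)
The plan is to obtain both directions from one principle: since taxes only raise costs, and costs are strictly secondary to goals, the deviation $(\strat_{-i},\sigma_i')$ can be a beneficial deviation for $i$ under \emph{some} scheme exactly when it is an initial deviation $\strat\to_i(\strat_{-i},\sigma_i')$, and in that case a scheme can always be chosen that makes it strictly beneficial for $i$ -- all one needs is to arrange that $i$'s limit-average cost on $\rho(\strat)$ strictly exceeds its limit-average cost on $\rho(\strat_{-i},\sigma_i')$. I also use three standard facts: strategy profiles in $\calV$ generate \emph{eventually periodic} runs (the product of the finite-state strategies with the finite arena is deterministic), so $R:=\{\rho(\strat)\mid\strat\in\calV\}$ is finite; a run over the arena is determined by its sequence of action profiles from $s_0$; and limit-average costs are prefix-independent. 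It follows that a dynamic scheme, realised by a finite-state machine that records which runs of $R$ are still consistent with the action-profile history, can set agent $i$'s limit-average cost on each $\rho\in R$ to any prescribed $\kappa_i(\rho)\ge c_i^*$, independently across the runs of $R$ (even when several of them eventually coincide, since the machine distinguishes them by their prefixes), by adding $\kappa_i(\rho)-\calC_i(\rho,v)\ge 0$ to agent $i$ at each step $v$ of the periodic part of $\rho$ and $0$ at all other times.

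For the forward direction, let $T$ eliminate $X$. For each $\strat\in X$ fix a beneficial deviation under $T$: an agent $i(\strat)$ and a profile $f(\strat)=(\strat_{-i(\strat)},\sigma')$ with $f(\strat)\pref_{i(\strat)}^{T}\strat$. Each $f(\strat)$ is an initial deviation from $\strat$: the runs differ (equal runs give equal $T$-utilities), and $i(\strat)\in W(\strat)$ implies $i(\strat)\in W(f(\strat))$ (losing the goal would drop $i(\strat)$'s $T$-utility below $1$). Let $\calV=X\cup\{f(\strat)\mid\strat\in X\}$ (finite) and let $E$ be the set of all initial deviations between nodes of $\calV$ that are beneficial for the deviating agent under $T$. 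Then $(\strat,f(\strat))\in E$ for every $\strat\in X$, so condition~1 holds. For condition~2, suppose a deviation cycle $(\strat^1,\dots,\strat^m)$ with $\rho(\strat^1)=\rho(\strat^m)$ involved only agent $i$; then $i$'s winning status is non-decreasing along the $\to_i$-edges and equal at the endpoints, hence constant along the cycle, so each edge being beneficial for $i$ gives $\calC_i^T(\rho(\strat^{j+1}))<\calC_i^T(\rho(\strat^j))$, whence $\calC_i^T(\rho(\strat^1))>\dots>\calC_i^T(\rho(\strat^m))=\calC_i^T(\rho(\strat^1))$, a contradiction.

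For the backward direction, let $\Gamma=(\calV,E)$ be a finite deviation graph meeting conditions~1 and~2; I build a scheme $T^\Gamma$ that turns every edge of $\Gamma$ into a strict beneficial deviation for its deviating agent, so that condition~1 gives each $\strat\in X$ a beneficial deviation under $T^\Gamma$ and hence $NE(\calG^{T^\Gamma})\cap X=\emptyset$. For each agent $i$, let $\prec_i$ be the relation on $R$ with $\rho(\strat')\prec_i\rho(\strat)$ whenever some edge $(\strat,\strat')\in E$ has $\strat\to_i\strat'$ with $i$ a winner on $\rho(\strat)$ iff a winner on $\rho(\strat')$ (the remaining case, $i$ a loser on $\rho(\strat)$ and a winner on $\rho(\strat')$, is beneficial for $i$ under every scheme and constrains nothing). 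The key claim is that each $\prec_i$ is acyclic: a $\prec_i$-cycle produces, via the source profiles of its edges, an observed deviation cycle that involves only agent $i$ and along which $i$'s winning status is constant, which condition~2 forbids. Granting acyclicity, extend each $\prec_i$ to a strict total order on $R$, choose for each $\rho\in R$ a real $\kappa_i(\rho)\ge c_i^*$ with $\rho'\prec_i\rho\Rightarrow\kappa_i(\rho')<\kappa_i(\rho)$, and let $T^\Gamma$ be the machine of the first paragraph with these targets, so $\calC_i^{T^\Gamma}(\rho)=\kappa_i(\rho)$ for all $\rho\in R$, $i\in\calN$. For any edge $(\strat,\strat')\in E$ with $\strat\to_i\strat'$: if $i$ goes from loser to winner the deviation is beneficial; otherwise $\calC_i^{T^\Gamma}(\rho(\strat'))=\kappa_i(\rho(\strat'))<\kappa_i(\rho(\strat))=\calC_i^{T^\Gamma}(\rho(\strat))$, so $\strat'\pref_i^{T^\Gamma}\strat$. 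Thus $T^\Gamma$ eliminates $X$.

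The hard part is the acyclicity claim of the backward direction -- equivalently, that the finitely many strict-cost-decrease requirements forced by the edges of $\Gamma$ can be satisfied simultaneously by one scheme -- and this is exactly where condition~2 is needed. The delicacy is that the principal observes runs, not strategies, so the requirements attach to runs rather than to strategy profiles; the cycle argument must therefore be carried out in run-space, matching a $\prec_i$-cycle against condition~2 through the notion of observed deviation paths together with the monotonicity of winning status along single-agent deviations. The remaining steps -- building the tracking machine and verifying that its limit-average output equals the prescribed $\kappa_i(\rho)$ -- are routine, using only eventual periodicity and prefix-independence of costs.
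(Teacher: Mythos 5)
Your proof is correct and follows essentially the same route as the paper: the forward direction extracts the $T$-induced graph and kills single-agent cycles via transitivity of strict preference plus run-indistinguishability (the paper argues the contrapositive but with the identical core argument), and the backward direction realises a per-agent cost ordering on the finitely many ultimately periodic runs of $\calV$ with a run-tracking finite-state machine, exactly as the paper does via its longest-observed-path quantities $\ell_i$ and $d_i$ (your linear extension of the acyclic relation $\prec_i$ is just a cleaner way to produce the same monotone cost assignment). Your explicit move to run-space and observed deviation cycles also matches what the paper's own proof implicitly relies on for the finiteness of $\ell_i$ and $d_i$.
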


\begin{proof}
For the forward direction, suppose that there is no deviation graph $\Gamma$ satisfying both properties (1) and (2) in the statement. Then, for all deviation graphs $\Gamma$, either for some $\strat \in X$, there is no $\strat' \in \calV$ such that $(\strat,\strat') \in \calV$, or there is some deviation cycle in $\Gamma$ involving only one agent. Now consider any deviation graph $\Gamma = (\calV,E)$, where $\calV = X \cup \set{\strat' \mid \strat \to_i \strat' \mbox{ for some } \strat \in X \mbox{ and } i \in \calN}$. In the first case, it is clear that any taxation scheme that induces $\Gamma$ does not eliminate $\set{\strat}$ and hence $X$. In the second case, no taxation scheme can induce the deviation graph $\Gamma$. To see why, suppose for contradiction that some taxation scheme $T$ induces $\Gamma$ and let $i$ be the agent for which there is a deviation cycle $C = \set{\strat^1,\ldots,\strat^m}$ in $\Gamma$ involving only agent $i$. Then, we have $\strat^1 \pref_i^T \strat^2 \pref_i^T \ldots \pref_i^T \strat^m$ and by transitivity of the preference relation $\pref_i^T$, we can conclude that $\strat^1 \pref_i^T \strat^m$. However, by definition of a deviation cycle, $\strat^1$ and $\strat^m$ are indistinguishable, so agent $i$ will always receive the same utility under both $\strat^1$ and $\strat^m$, no matter what taxation scheme is imposed on them and hence, we have a contradiction. From this, we can conclude that every deviation graph that can be induced by a taxation scheme does not eliminate $X$ and hence, $X$ is not eliminable, proving this part of the statement. 

For the backward direction, assume that there is a deviation graph $\Gamma$ that satisfies both properties. Under this assumption, we will construct a dynamic taxation scheme $T$ that eliminates $X$. To assign the appropriate costs to different strategy profiles, we will make use of the lengths of deviation paths within $\Gamma$. For every $i \in \calN$, let $\ell_i$ denote the length of the longest observed deviation path in $\Gamma$ that involves only agent $i$. Additionally, for all $\strat \in \calV$, let $d_i(\rho(\strat))$ denote the length of the longest \textit{observed} deviation path in $\Gamma$ that starts from $\rho(\strat)$ and involves only $i$. The difference between these two quantities will serve as the basis for how much taxation an agent $i$ will incur for any given strategy profile in $\calV$. Observe that because it is assumed that no deviation cycle involves only one agent, both quantities are well-defined and finite for all agents and strategy profiles. Then, for a deviation graph $\Gamma$ and a run $\rho$, let $\indev{\rho}$ be the set of agents $i \in \calN$ for which there is some pair of strategy profiles $\strat, \strat' \in \calV$ such that we have both $(\strat,\strat') \in E_D$ and $\rho = \rho(\strat')$. In other words, $\indev{\rho}$ represents the set of agents who have an initial deviation from some other strategy profile in $\calV$ to one that generates the run $\rho$. With this, we would like to construct a dynamic taxation scheme such that for any strategy profile $\strat$, the following criteria are satisfied:
\begin{itemize}
    \item $C_i^T(\rho(\strat)) \geq (\ell_i - d_i(\rho(\strat)))\cdot (c_i^*+1) \qquad$ if $i \in \indev{\rho}$;
    \item $C_i^T(\rho(\strat)) = C_i(\rho(\strat)) \qquad \qquad \qquad \qquad\ $ otherwise.
\end{itemize}

Intuitively, the idea is to ensure that for every edge $(\strat,\strat') \in E$, the agent $i \in \calN$ for whom $\strat \to_i \strat'$ gets taxed by a significantly higher amount for choosing $\strat$ compared to when they choose $\strat'$. To see why it is possible to construct such a taxation scheme, first observe that if $\rho \neq \rho'$ for any two runs $\rho,\rho'$, then there is some dynamic taxation scheme that can distinguish between the two by simply tracing out the two runs up to the first point in which they differ and then branching accordingly. From this point onwards, the dynamic taxation scheme can then output static taxation schemes, which assign different limiting average costs to the agents according to the above criteria. Extending this approach to a taxation scheme that distinguishes between all unique runs generated by elements of $\calV$, it follows that there is a dynamic taxation scheme $T$ that satisfies the two criteria. Consequently, for all $(\strat,\strat') \in E$, it follows that $\strat' \pref_i^T \strat$ because $\rho(\strat) \neq \rho(\strat')$ by definition of the initial deviation relation $\to_i$. Moreover, because it is assumed that no deviation cycle involves only one agent, $T$ gives rise to a strict total ordering $\pref_i^T$ on the elements of $\calV$ for each $i \in \calN$. Finally, by property $(1)$, it holds that for every $\strat \in X$, some agent has a beneficial deviation from $\strat$ to another $\strat' \in \calV$ under $T$ and hence, $T$ eliminates $X$.
\end{proof}

\end{document}